\def \nG {n_{_G}}
\def \nH {n_{_H}}
\newcommand{\inR}{{\rm in}}
\newcommand{\outR}{{\rm out}}
\newcommand{\gt}{\gamma_t}
\newcommand{\semiT}{\gamma_{\rm t2}}
\newcommand{\st}{{\rm st}}
\newcommand{\cF}{\mathcal{F}}
\newcommand{\cC}{\mathcal{C}}
\newcommand{\1}{ \vspace{0.1cm} }
\newtheorem{definition}{Definition}[section]
\newtheorem{theorem}{Theorem}[section]
\newtheorem{lemma}{Lemma}[section]
\newtheorem{cor}{Corollary}[section]
\newtheorem{observation}{Observation}[section]
\newtheorem{claim}{Claim}[section]
\newcommand{\qed}{$\Box$}
\let\oldenumerate\enumerate
\renewcommand{\enumerate}{
  \oldenumerate
  \setlength{\itemsep}{0pt}
  \setlength{\parskip}{0pt}
  \setlength{\parsep}{0pt}
}
\begin{document}

\title{Algorithmic Aspects of Semitotal Domination in Graphs}

\author{$^{1}$Michael A. Henning\thanks{Research
supported in part by the University of Johannesburg and the South African National Research Foundation.} and $^{2}$Arti Pandey \\ \\
$^{1}$Department of Pure and Applied Mathematics \\
University of Johannesburg \\
Auckland Park, 2006 South Africa\\
\small \tt Email: mahenning@uj.ac.za
\\
$^{2}$Department of Mathematics\\
Indian Institute of Technology Ropar\\
Nangal Road, Rupnagar, Punjab 140001, INDIA \\
\small \tt Email: arti@iitrpr.ac.in \\
\\
}

\date{}
\maketitle

\begin{abstract}
For a graph $G=(V,E)$, a set $D \subseteq V$ is called a semitotal dominating set of $G$ if $D$ is a dominating set of $G$, and every vertex in $D$ is within distance~$2$ of another vertex of~$D$. The \textsc{Minimum Semitotal Domination} problem is to find a semitotal dominating set of minimum cardinality. Given a graph $G$ and a positive integer $k$, the \textsc{Semitotal Domination Decision} problem is to decide whether $G$ has a semitotal dominating set of cardinality at most $k$. The \textsc{Semitotal Domination Decision} problem is known to be NP-complete for general graphs. In this paper, we show that the \textsc{Semitotal Domination Decision} problem remains NP-complete for planar graphs, split graphs and chordal bipartite graphs. We give a polynomial time algorithm to solve the \textsc{Minimum Semitotal Domination} problem in interval graphs. We show that the \textsc{Minimum Semitotal Domination} problem in a graph with maximum degree~$\Delta$ admits an approximation algorithm that achieves the approximation ratio of $2+3\ln(\Delta+1)$, showing that the problem is in the class log-APX. We also show that the \textsc{Minimum Semitotal Domination} problem cannot be approximated within $(1 - \epsilon)\ln |V| $ for any $\epsilon > 0$ unless NP $\subseteq$  DTIME $(|V|^{O(\log  \log  |V|)})$. Finally, we prove that the \textsc{Minimum Semitotal Domination} problem is APX-complete for bipartite graphs with maximum degree $4$.
\end{abstract}

{\small \textbf{Keywords:} Domination; Semitotal Domination; Bipartite Graphs; Chordal Graphs; Interval Graphs; Graph algorithm; NP-complete, Approximation Algorithm, APX-complete} \\
\indent {\small \textbf{AMS subject classification:} 05C69}

\newpage
\section{Introduction}

A \emph{dominating set} in a graph $G$ is a set $S$ of vertices of $G$ such that every vertex in $V(G) \setminus S$ is adjacent to at least one vertex in $S$. The \emph{domination number} of $G$, denoted by $\gamma(G)$, is the minimum cardinality of a dominating set of $G$. More thorough treatment of domination, can be found in the books~\cite{hhs1,hhs2}.

A \emph{total dominating set}, abbreviated a TD-set, of a graph $G$ with no isolated vertex is a set $S$ of vertices of $G$ such that every vertex in $V(G)$ is adjacent to at least one vertex in $S$. The \emph{total domination number} of $G$, denoted by $\gt(G)$, is the minimum cardinality of a TD-set of $G$. Total domination is now well studied in graph theory. The literature on the subject of total domination in graphs has been surveyed and detailed in the recent book~\cite{total2}. A survey of total domination in graphs can also be found in~\cite{total1}.

A relaxed form of total domination called semitotal domination was introduced by Goddard, Henning and McPillan~\cite{semi-total1}, and studied further in~\cite{semi-total2,semi-total3,semi-total4,semi-total5} and elsewhere. A set $S$ of vertices in a graph $G$ with no isolated vertices is a \emph{semitotal dominating set}, abbreviated a semi-TD-set, of $G$ if $S$ is a dominating set of $G$ and every vertex in $S$ is within distance~$2$ of another vertex of $S$. The \emph{semitotal domination number} of $G$, denoted by $\semiT(G)$, is the minimum cardinality of a semi-TD-set of $G$. Since every TD-set is a semi-TD-set, and since every semi-TD-set is a dominating set, we have the following observation.

\begin{observation}{\rm (\cite{semi-total1})}
 \label{ob:chain}
For every graph $G$ with no isolated vertex, $\gamma(G) \le \semiT(G) \le \gamma_t(G)$.
\end{observation}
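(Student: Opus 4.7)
The plan is to verify the two inequalities separately, each by a direct containment of the respective solution classes, exactly as the quoted definitions invite.

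First I would establish the right-hand inequality $\semiT(G) \le \gt(G)$ by showing that every TD-set is a semi-TD-set. Let $S$ be any TD-set of $G$. By definition, every vertex of $V(G)$ (and in particular every vertex of $S$) has a neighbor in $S$. Hence every vertex of $S$ lies within distance~$1$, and so within distance~$2$, of another vertex of $S$. Moreover, $S$ dominates $G$ since each vertex of $V(G)\setminus S$ is adjacent to a vertex of $S$. Thus $S$ satisfies both conditions in the definition of a semi-TD-set, and a minimum TD-set witnesses $\semiT(G) \le \gt(G)$.

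Next I would establish the left-hand inequality $\gamma(G) \le \semiT(G)$, which is even more immediate: the definition of a semi-TD-set explicitly requires the set to be a dominating set. Hence any minimum semi-TD-set is in particular a dominating set of $G$, giving $\gamma(G) \le \semiT(G)$. Since $G$ has no isolated vertex, $\gt(G)$ is well defined, so both inequalities make sense.

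There is essentially no obstacle here; the only care needed is to check that the distance-$2$ condition in the definition of a semi-TD-set is indeed implied by (rather than in tension with) the adjacency condition of a TD-set, which is clear because distance~$1$ is a special case of distance at most~$2$. The whole argument is a one-line unpacking of the three definitions and requires no graph-theoretic machinery beyond them.
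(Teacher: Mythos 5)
Your proof is correct and matches the paper's own justification, which is simply the remark that every TD-set is a semi-TD-set and every semi-TD-set is a dominating set. You merely unpack these two containments in slightly more detail (noting that adjacency to a vertex of $S$ gives distance at most~$2$ to another vertex of $S$), so the approach is essentially identical.
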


By Observation~\ref{ob:chain}, the semitotal domination number is squeezed between arguably the two most important domination parameters, namely the domination number and the total domination number.

The \textsc{Minimum Domination} problem is to find a dominating set of cardinality $\gamma(G)$. Given a graph $G$ and an integer $k$, the \textsc{Domination Decision} problem is to determine whether $G$ has a dominating set of cardinality at most $k$.  The \textsc{Minimum Total Domination} problem is to find a total dominating set of cardinality $\gt(G)$.  The \textsc{Minimum Semitotal Domination} problem is to find a semi-TD-set of minimum cardinality.
More formally, the minimum semitotal domination problem and its decision version are defined as follows:
\1

\noindent\underline{\textsc{Minimum Semitotal Domination} problem (MSDP)}
\\
[-12pt]
\begin{enumerate}
  \item[] \textbf{Instance}: A graph $G=(V,E)$.
  \item[] \textbf{Solution}: A semi-TD-set $D$ of $G$.
  \item[] \textbf{Measure}: Cardinality of the set $D$.
\end{enumerate}

\noindent\underline{\textsc{Semitotal Domination Decision} problem (SDDP)}
\\
[-12pt]
\begin{enumerate}
  \item[] \textbf{Instance}: A graph $G=(V,E)$ and a positive integer $k \le |V|$.
  \item[] \textbf{Question}: Does there exist a semi-TD-set $D$ in $G$ such that $|D| \le k$?
\end{enumerate}

The \textsc{Semitotal Domination Decision} problem is known to be NP-complete for general graphs~\cite{semi-total2}. On the positive side, a linear time algorithm exists to find a minimum semi-TD-set in trees~\cite{semi-total2}. In this paper, we further continue the algorithmic study of the \textsc{Minimum Semitotal Domination} problem. The main contributions of the paper are summarized below. In Section~\ref{notation}, we discuss some pertinent definitions. In Section~\ref{S:prelim}, we present some preliminary results and discuss some complexity difference between total domination and semitotal domination problem. In Section~\ref{S:NPresults}, we prove that the \textsc{Semitotal Domination Decision} problem remains NP-complete for chordal bipartite graphs, planar graphs and split graphs. In Section~\ref{S:interval}, we present a polynomial time algorithm to compute a minimum cardinality semi-TD-set of interval graphs, an important subclass of chordal graphs. In Section~\ref{S:apx}, we propose an approximation algorithm for the problem. In this section, we also discuss some approximation hardness results. Finally, Section~\ref{S:conclude}, concludes the paper.

\section{Terminology and Notation}
\label{notation}

For notation and graph theory terminology, we in general follow~\cite{total2}. Specifically, let $G = (V, E)$ be a graph with vertex set $V=V(G)$ and edge set $E=E(G)$, and let $v$ be a vertex in $V$. The \emph{open neighborhood} of $v$ is the set $N_G(v) = \{u \in V \, | \, uv \in E\}$ and the \emph{closed neighborhood of $v$} is $N_G[v] = \{v\} \cup N_G(v)$. Thus, a set $D$ of vertices in $G$ is a dominating set of $G$ if $N_G(v) \cap D \ne \emptyset$ for every vertex $v \in V \setminus D$, while $D$ is a total dominating set of $G$ if $N_G(v) \cap D \ne \emptyset$ for every vertex $v \in V$. The \emph{distance} between two vertices $u$ and $v$ in a connected graph $G$, denoted by $d_G(u,v)$, is the length of a shortest $(u,v)$-path in $G$. The \emph{distance} $d_G(v,S)$ between a vertex $v$ and a set $S$ of vertices in a graph $G$ is the minimum distance from~$v$ to a vertex of $S$ in $G$.

For a vertex $v$ in $G$ and an integer~$i \ge 1$, let $N_i(v;G)$ denote the set of all vertices at distance exactly~$i$ from $v$ in $G$. In particular, $N_1(v;G)$ is the open neighborhood, $N_G(v)$, of $v$. Further, let $N_i[v;G]$ denote the set of all vertices within distance~$i$ from $v$ in~$G$. If the graph $G$ is clear from the context, we omit it in the above expressions. For example, we write $N(v)$, $N[v]$ and $d(u,v)$ rather than $N_G(v)$, $N_G[v]$ and $d_G(u,v)$, respectively.

For a set $S \subseteq V(G)$, the subgraph induced by $S$ is denoted by $G[S]$.  If $G[C]$, where $C \subseteq V$, is a complete subgraph of $G$, then $C$ is a \emph{clique} of $G$. A set $S \subseteq V$ is an \emph{independent set} if $G[S]$ has no edge. A graph $G$ is \emph{chordal} if every cycle in $G$ of length at least four has a \emph{chord}, that is, an edge joining two non-consecutive vertices of the cycle. A chordal graph $G = (V, E)$ is a \emph{split graph} if $V$ can be partitioned into two sets $I$ and $C$ such that $C$ is a clique and $I$ is an independent set.

A graph $G = (V,E)$ is \emph{bipartite} if $V$ can be partitioned into two disjoint sets $X$  and $Y$ such that every edge of $G$ joins a vertex in $X$ to a vertex  in $Y$, and such a partition $(X,Y)$ of $V(G)$ is called a \emph{bipartition} of $G$. Further, we denote such a bipartite graph $G$ by $G=(X,Y,E)$. A bipartite graph $G$ is a \emph{chordal bipartite} if every cycle of length at least~$6$ has a chord. A graph $G$ is a \emph{planar graph} if it can be drawn on the plane in such a way that no two edges cross each other except at a vertex. Such a drawing is called a \emph{planar embedding} of the planar graph.

A graph $G$ is an \emph{interval graph} if there exists a one-to-one correspondence between its vertex set and a family of closed intervals in the real line, such that two vertices are adjacent if and only if their corresponding intervals intersect. Such a family of intervals is called an \emph{interval model} of a graph.

Let $D = (V,A)$ be a digraph with vertex set $V$ and arc set $A$, and let $v$ be a vertex of~$D$. We write $(u,v) \in A$ to denote an arc directed from $u$ to $v$. By a \emph{path} in $D$, we mean a directed path.

In the rest of the paper, all graphs considered are simple connected graphs with at least two vertices, unless otherwise mentioned specifically. We use the standard notation $[k] = \{1,\ldots,k\}$.
For most of the approximation related terminologies, we refer to  \cite{ausiello}.

\section{Preliminary Result}
\label{S:prelim}

In this subsection, we make some complexity difference between total domination and semitotal domination problem. The \textsc{Minimum Total Domination} problem is polynomial time solvable for chordal bipartite graphs~\cite{total-cbg}, but in Section~\ref{S:NPresults}, we will show that the \textsc{Semitotal Domination Decision} problem is NP-complete for this graph class.

On the other hand, we define a graph class, called GP$4$-\emph{graph}, for which the decision version of the total domination problem is NP-complete, but the \textsc{Minimum Semitotal Domination} problem is easily solvable.

\begin{definition}{\rm (GP$4$-graph)}
\label{defn1}
A graph $G=(V,E)$ is a \emph{GP}$4$-\emph{graph} if it can be obtained from a general connected graph $H=(V_{H},E_{H})$ where $V_{H}=\{v_{1},v_{2},\ldots,v_{\nH}\}$, by adding a path of length~$4$ to every vertex of $H$ so that the resulting paths are vertex disjoint. Formally, $V = V_{H} \cup \{w_i, x_{i},y_{i},z_{i} \mid i \in [\nH] \, \}$ and $E=E_{H}\cup \{v_{i}w_{i},w_{i}x_{i},x_{i}y_{i},y_{i}z_{i}\mid i \in [\nH] \, \}$.
\end{definition}

Let $G$ be a GP$4$-graph of order~$\nG = 5\nH$ as constructed in Definition~\ref{defn1}. Let $V_i = \{v_i,w_i, x_i, y_i, z_i\}$ for $i \in [\nH]$. If $S$ is a semi-TD-set of $G$, then the set $S$ contains at least two vertices from each set $V_i$ in order to dominate the vertices in $V_i \setminus \{v_i\}$ for each $i \in [\nH]$. Thus, $\semiT(G) \ge 2\nH$. However, the set $\{w_i,y_i \mid i \in [\nH]\}$ is a semi-TD-set of $G$, and so $\semiT(G) \le 2\nH$. Consequently, $\semiT(G) = 2\nH = 2\nG/5$. We state this formally as follows.

\begin{observation}
 \label{ob:GP4}
If $G$ is a \emph{GP}$4$-\emph{graph}, then $\semiT(G) = \frac{2}{5}|V(G)|$.
\end{observation}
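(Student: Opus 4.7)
The plan is to partition $V(G)$ into the $n_H$ blocks $V_i=\{v_i,w_i,x_i,y_i,z_i\}$ for $i\in[\nH]$, which are pairwise disjoint, cover $V(G)$, and satisfy $|V(G)|=5\nH$. I would then prove matching lower and upper bounds of $2\nH$ on $\semiT(G)$.

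For the lower bound, I would fix an arbitrary semi-TD-set $S$ and show $|S\cap V_i|\ge 2$ for each $i$, summing to $|S|\ge 2\nH$. The key structural fact is that the path attached at $v_i$ is a pendant path whose internal vertices have no edges outside $V_i$; in particular $z_i$ has the unique neighbor $y_i$, so domination of $z_i$ forces $S\cap\{y_i,z_i\}\ne\emptyset$. Independently, $w_i$ has closed neighborhood $\{v_i,w_i,x_i\}$, so domination of $w_i$ forces $S\cap\{v_i,w_i,x_i\}\ne\emptyset$. Since the sets $\{y_i,z_i\}$ and $\{v_i,w_i,x_i\}$ are disjoint, these two constraints contribute at least two distinct vertices of $V_i$ to $S$, giving $|S\cap V_i|\ge 2$.

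For the upper bound, I would exhibit the explicit candidate $S^{*}=\{w_i,y_i\mid i\in[\nH]\}$ of size $2\nH$ and verify both defining properties. Domination is immediate because $N[w_i]\supseteq\{v_i,w_i,x_i\}$ and $N[y_i]=\{x_i,y_i,z_i\}$, so $V_i\subseteq N_G[\{w_i,y_i\}]$, and the $V_i$'s tile $V(G)$. For the semitotal condition, within each block we have $d_G(w_i,y_i)=2$ via the path $w_i\,x_i\,y_i$, so each chosen vertex has another chosen vertex within distance~$2$.

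Combining the two bounds yields $\semiT(G)=2\nH=\tfrac{2}{5}|V(G)|$. There is no real obstacle here: the argument is essentially a direct verification. The only point that deserves explicit mention is the disjointness of $\{y_i,z_i\}$ and $\{v_i,w_i,x_i\}$ in the lower-bound step, which is what upgrades the two separate domination requirements into the bound $|S\cap V_i|\ge 2$ rather than merely $|S\cap V_i|\ge 1$.
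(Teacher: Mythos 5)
Your proof is correct and follows the same route as the paper: the lower bound via the blocks $V_i$ (the paper notes that two vertices of $S$ are needed to dominate $V_i\setminus\{v_i\}$, which is exactly your disjoint-neighborhood argument for $w_i$ and $z_i$ spelled out) and the upper bound via the explicit set $\{w_i,y_i\mid i\in[\nH]\}$. No issues.
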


\begin{lemma}
\label{l:lemGP4}
If $G$ is a GP$4$-graph constructed from a graph $H$ as in Definition~\ref{defn1}, then $H$ has a TD-set of cardinality~$k$ if and only if $G$ has a TD-set of cardinality~$2\nH +k$.
\end{lemma}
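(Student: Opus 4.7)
The plan is to handle the two directions separately: the forward direction by a direct construction, and the reverse direction by a structural analysis of how any TD-set of $G$ meets each tail $V_i=\{v_i,w_i,x_i,y_i,z_i\}$, combined with a double count over the ``excess'' $|S\cap V_i|-2$.

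For ($\Rightarrow$), given a TD-set $S_H$ of $H$ with $|S_H|=k$, I set $S=S_H\cup\{x_i,y_i:i\in[\nH]\}$. Inside each tail $V_i$ the pair $\{x_i,y_i\}$ totally dominates $\{w_i,x_i,y_i,z_i\}$, and each $v_i$ is totally dominated by some $H$-neighbor that lies in $S_H$. Hence $S$ is a TD-set of $G$ with $|S|=2\nH+k$.

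For ($\Leftarrow$), let $S$ be a TD-set of $G$ with $|S|=2\nH+k$ and write $S_i=S\cap V_i$. Since $z_i$'s only neighbor is $y_i$ we have $y_i\in S$; and the total domination of $w_i$ and $y_i$ further force $\{v_i,x_i\}\cap S_i\neq\emptyset$ and $\{x_i,z_i\}\cap S_i\neq\emptyset$. A brief case check yields $|S_i|\ge 2$, with equality forcing $S_i=\{x_i,y_i\}$. Call an index $i$ \emph{critical} if $N_H(v_i)\cap S=\emptyset$; then $v_i$ is totally dominated only by $w_i$, so $w_i\in S$ and $|S_i|\ge 3$. I then define
\[
T=\bigl(S\cap V(H)\bigr)\cup\{u_i:i\text{ critical}\},
\]
where $u_i\in N_H(v_i)$ is chosen arbitrarily (such a $u_i$ exists because $H$ is connected with at least two vertices). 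One checks that $T$ is a TD-set of $H$: for each $v_j$, either $j$ is non-critical and $v_j$ has an $H$-neighbor in $S\cap V(H)\subseteq T$, or $j$ is critical and $u_j\in T\cap N_H(v_j)$; each vertex of $T$ is itself some $v_l$, totally dominated by the same dichotomy.

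The main obstacle is the size bound $|T|\le k$. Let $e_i=|S_i|-2\ge 0$, so $\sum_i e_i = |S|-2\nH=k$. I would verify three simple inequalities: (a) if $i$ is critical then $e_i\ge 1$, since $w_i,y_i\in S_i$ and one of $x_i,z_i$ must also be in $S_i$; (b) if $v_i\in S$ then $e_i\ge 1$, since $|S_i|=2$ forces $S_i=\{x_i,y_i\}\not\ni v_i$; and crucially (c) if both hold then $e_i\ge 2$, since $S_i$ then contains $v_i,w_i,y_i$ and one of $x_i,z_i$. Setting $h=|S\cap V(H)|$ and $b=|\{i:i\text{ critical}\}|$, these combine to give $h+b\le\sum_i e_i=k$. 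Since $u_i\notin S$ for each critical $i$ by definition, the union defining $T$ is disjoint, so $|T|\le h+b\le k$. If $|T|<k$, I extend $T$ to size exactly $k$ by iteratively adding any vertex outside $T$: this remains a TD-set of $H$ because every vertex outside any TD-set already has a neighbor inside it.
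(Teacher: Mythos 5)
Your proof is correct and follows the same overall strategy as the paper's (a direct construction for the forward direction; a tail-by-tail analysis with replacement of ``bad'' vertices for the converse), but both halves differ in detail in ways worth recording. In the forward direction you add $\{x_i,y_i\}$ to each tail, and this pair really does totally dominate $w_i,x_i,y_i,z_i$; the paper instead adds $\{w_i,y_i\}$, which is the right pair for a \emph{semi}-TD-set but leaves $y_i$ with no neighbour in the set (its only neighbours are $x_i$ and $z_i$), so your choice quietly repairs a slip in the published construction. In the converse direction the paper's bookkeeping is lighter: it simply deletes $\{x_i,y_i,z_i \mid i\in[\nH]\}$ from $D'$ --- at least two of these lie in $D'$ for each $i$ ($y_i$, plus one of $x_i,z_i$), so the remainder has size at most $k$ immediately --- and then swaps each surviving $w_i$ for an $H$-neighbour of $v_i$, the replacement being paid for by the $w_i$ it replaces. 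Your count of $h+b$ against $\sum_i e_i$, with the case analysis of critical indices, reaches the same bound by a heavier route; its only real payoff is making explicit why an index that is both critical and has $v_i\in S$ carries excess at least $2$. Finally, your padding step at the end addresses the ``exactly $k$ versus at most $k$'' point that the paper simply elides (its proof also only produces a TD-set of size at most $k$); note that for $k>\nH$ padding inside $V(H)$ is impossible, so like the paper you are implicitly reading the lemma as ``at most $k$'', which is all the NP-completeness application requires.
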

\begin{proof}
Let $D$ be a TD-set of $H$ and $|D|=k$. Then, $D \cup \{w_{i},y_{i} \mid i \in [ \nH ] \}$ is a TD-set of $G$ of cardinality $2\nH +k$.

Conversely, suppose that $D'$ is a TD-set of $G$ of cardinality $2\nH+k$. In order to totally dominate the vertex $z_i$, we note that $y_i \in D'$ for all $i \in [\nH]$. Further, to totally dominate the vertex $y_{i}$, we note that $x_{i}$ or $z_{i}$ must belong to $D'$. Now define $D = D' \setminus \{x_{i},y_{i},z_{i} \mid i \in [\nH] \}$. Then, $|D|\le k$. Also, $D$ totally dominates all the vertices of $V_{H}$. Note that a vertex $v_{i} \in V_{H}$ is totally dominated by either $w_{i}$ or by one of the neighbors of $v_i$ in $H$. If $w_{i} \in D$ for some $i \in [\nH]$, then we simply replace the vertex $w_i$ in $D$ by an arbitrary neighbors of $v_i$ in $H$. Hence, we can choose the set $D$ so that $D$ is a subset of $V_{H}$, implying that $D$ is a TD-set of $H$ of cardinality at most~$k$. This proves that $H$ has a TD-set of cardinality~$k$.~\qed
\end{proof}

%\medskip
Since the decision version of the \textsc{Minimum Total Domination} problem is already known to be NP-complete for general graphs~\cite{total-npc}, the following theorem follows directly from Lemma~\ref{l:lemGP4}.

\begin{theorem}
The decision version of the \textsc{Minimum Total Domination} problem is NP-complete for GP$4$-graphs.
\end{theorem}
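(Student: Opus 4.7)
The plan is to show this is an immediate consequence of Lemma~\ref{l:lemGP4} together with the known NP-completeness of \textsc{Total Domination Decision} on general graphs (cited as~\cite{total-npc}). We give a polynomial-time reduction from the general-graph version to the GP$4$-graph version.

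First I would dispense with membership in NP: given a candidate subset $D \subseteq V(G)$, one checks in linear time that $|D| \le k$, that $G[D]$ has no isolated vertex (equivalently, every vertex of $D$ has a neighbour in $D$), and that every vertex outside $D$ has a neighbour in $D$. So the decision version of \textsc{Minimum Total Domination} on GP$4$-graphs lies in NP.

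For NP-hardness, given an instance $(H,k)$ of the general \textsc{Total Domination Decision} problem, construct the GP$4$-graph $G$ from $H$ exactly as in Definition~\ref{defn1}: attach to each vertex $v_i$ of $H$ a fresh path $v_i w_i x_i y_i z_i$. Since $G$ has exactly $5\nH$ vertices and $|E_H|+4\nH$ edges, the construction is clearly computable in polynomial time in $|V(H)|$. Set $k' = 2\nH + k$, and output the instance $(G,k')$. Correctness follows directly from Lemma~\ref{l:lemGP4}: if $H$ has a TD-set of cardinality at most~$k$, say of cardinality $k^* \le k$, then $G$ has a TD-set of cardinality $2\nH + k^* \le k'$; conversely, if $G$ has a TD-set of cardinality at most $k'=2\nH+k$, then because every TD-set of $G$ must contain at least two vertices from each attached path $w_ix_iy_iz_i$ (to totally dominate $z_i$ and $y_i$), its size has the form $2\nH + j$ with $0 \le j \le k$, and the construction in the converse direction of Lemma~\ref{l:lemGP4} yields a TD-set of $H$ of cardinality at most~$j \le k$.

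There is no real obstacle here beyond making sure the ``cardinality exactly'' phrasing of Lemma~\ref{l:lemGP4} is used correctly in both directions of the at-most inequality; the lemma has already done all of the combinatorial work of relating TD-sets of $H$ to TD-sets of $G$. Since the reduction is polynomial and preserves yes/no answers, and since \textsc{Total Domination Decision} is NP-complete on general graphs~\cite{total-npc}, the theorem follows.~\qed
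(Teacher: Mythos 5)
Your proposal is correct and follows exactly the route the paper intends: the paper states that the theorem ``follows directly from Lemma~\ref{l:lemGP4}'' together with the NP-completeness of total domination on general graphs~\cite{total-npc}, and your write-up simply fills in the routine details (NP membership, the polynomial-time construction, setting $k' = 2\nH + k$, and the careful handling of ``exactly $k$'' versus ``at most $k$'' via the forced $2\nH$ vertices on the attached paths). Nothing is missing and nothing differs in substance from the paper's argument.
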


\section{NP-Completeness Results}
\label{S:NPresults}

In this section, we study the NP-completeness of the \textsc{Semitotal Domination Decision} problem. The \textsc{Semitotal Domination Decision} problem is NP-complete for general graphs. We strengthen the complexity result of the \textsc{Semitotal Domination Decision} problem, by showing that it remains NP-complete for planar graphs, chordal bipartite graphs and split graphs.

\subsection{Results for Chordal Bipartite Graphs and Planar Graphs}

In this section, we prove the hardness result for the \textsc{Semitotal Domination Decision} problem in chordal bipartite graphs and planar graphs. The proof involves a reduction from the \textsc{Domination Decision} problem. The following NP-completeness result is already known for the  \textsc{Domination Decision} problem.

\begin{theorem}{\rm (\cite{np,muller})}
\label{t:known1}
The \textsc{Domination Decision} problem is NP-complete for bipartite graphs, chordal graphs and planar graphs. It also remains NP-complete for chordal bipartite graphs (a subclass of bipartite graphs) and split graphs (a subclass of chordal graphs).
\end{theorem}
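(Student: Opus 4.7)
The plan is to verify membership in NP and then establish NP-hardness separately for each listed graph class via a polynomial-time reduction from a known NP-complete problem. Membership in NP is routine: a candidate set $D$ with $|D|\le k$ is a polynomial-length certificate, and one can verify in linear time that every vertex outside $D$ has a neighbor in $D$. So the task splits into four NP-hardness reductions, and since split graphs are chordal and chordal bipartite graphs are bipartite, the split and chordal bipartite cases will automatically strengthen the chordal and bipartite cases, respectively.

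For split graphs I would reduce from \textsc{Set Cover}. Given a universe $U=\{u_1,\dots,u_n\}$ and a collection $\{S_1,\dots,S_m\}$, build a split graph $G$ with independent set $U$, clique $\{s_1,\dots,s_m\}$, and edge $s_i u_j$ whenever $u_j\in S_i$. Any dominating set can be pushed into the clique without enlarging it: if $u_j\in D$, replacing $u_j$ by any $s_i$ with $u_j\in S_i$ still dominates $G$, since the remaining $s_i$'s are already dominated through the clique. Hence $G$ has a dominating set of size at most $k$ iff the instance admits a cover of size at most $k$.

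For bipartite graphs I would reduce from \textsc{3-SAT}, using standard variable gadgets (short bipartite paths or cycles enforcing the choice of a literal) joined to a vertex for each clause, so that forced dominators encode a satisfying assignment. For planar graphs the Garey--Johnson template from \textsc{Planar 3-SAT} suffices: the formula's variable-clause incidence structure already embeds in the plane, and each variable and each clause is replaced by a small planar gadget with a known baseline domination requirement, so that an extra dominator is needed precisely when the formula is unsatisfiable.

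The main obstacle is the chordal bipartite case, since the construction must simultaneously preserve bipartiteness \emph{and} exclude induced cycles of length $\ge 6$. Following M\"uller, one replaces the generic edges of a bipartite reduction by small biclique-like gadgets such as $K_{2,2}$ blocks or ladder structures; the chords introduced kill the long induced cycles without disturbing the correspondence between dominating sets in the modified graph and solutions of the source instance. Designing these chord-preserving gadgets and verifying that the reduction still goes through is the technically heaviest part, while the other three classes follow from fairly standard NP-completeness templates.
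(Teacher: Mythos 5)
The paper offers no proof of this theorem: it is imported by citation from Garey--Johnson and from M\"uller--Brandst\"adt, so there is no in-paper argument to compare yours against. Your overall decomposition --- NP membership plus separate hardness reductions, with the split and chordal bipartite cases subsuming the chordal and bipartite cases --- is exactly how the literature establishes these facts. Your split-graph reduction from \textsc{Set Cover} is complete and correct: the exchange argument that pushes any element vertex $u_j \in D$ onto a clique vertex $s_i$ with $u_j \in S_i$ preserves domination because the clique is mutually adjacent, and this is the standard proof.

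The genuine gap is in the chordal bipartite case, which is both the only delicate one and the one this paper actually relies on (its own reductions in Section~4 start from \textsc{Domination} in chordal bipartite graphs and split graphs). You correctly identify that the difficulty is excluding induced cycles of length at least~$6$ while preserving bipartiteness, but you then only gesture at ``$K_{2,2}$ blocks or ladder structures'' without specifying the gadgets, the source problem, the resulting domination-number accounting, or the verification that no long chordless cycle survives --- and that verification is the entire content of M\"uller and Brandst\"adt's theorem. The bipartite and planar sketches are likewise plans rather than arguments (for planar graphs the classical route is via \textsc{Planar Vertex Cover} rather than a bespoke \textsc{Planar 3-SAT} gadget construction, though either can be made to work). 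As submitted, the attempt proves the split-graph clause and outlines, but does not prove, the rest; for a result the paper itself treats as known, citing the two sources is the appropriate substitute for the missing details.
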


\begin{theorem}
\label{t:bipartite}
The \textsc{Semitotal Domination Decision} problem is NP-complete for bipartite graphs.
\end{theorem}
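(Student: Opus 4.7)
My plan is to reduce from the \textsc{Domination Decision} problem on bipartite graphs, which is NP-complete by Theorem~\ref{t:known1}; membership of \textsc{Semitotal Domination Decision} in NP is immediate, since any candidate set $D$ can be verified to be a semi-TD-set of size at most $k$ in polynomial time by checking the dominating condition and, for each $v\in D$, whether another vertex of $D$ lies within distance~$2$.

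\2

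Given an instance $(G,k)$ of \textsc{Domination Decision} with $G=(V,E)$ bipartite and $V(G)=\{v_1,\ldots,v_n\}$, I construct $G'$ by attaching a pendant path of length~$2$ at every original vertex: for each $i\in[n]$ add two new vertices $a_i,b_i$ together with the edges $v_ia_i$ and $a_ib_i$. Since $G$ is bipartite and attaching trees preserves bipartiteness, $G'$ is bipartite, and the construction is plainly polynomial. The core claim is that $\gamma(G)\le k$ if and only if $\semiT(G')\le n+k$.

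\2

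For the forward direction, given a dominating set $D$ of $G$ with $|D|\le k$, take $D'=D\cup\{a_i:i\in[n]\}$. Then $D'$ is a semi-TD-set of $G'$: every $b_i$ is dominated by $a_i\in D'$, every $v_i$ is dominated by $D$, and for the semitotal condition each $v_i\in D$ is at distance~$1$ from $a_i\in D'$, while each $a_i$ is at distance~$1$ from $v_i$ if $v_i\in D$ and otherwise at distance~$2$ from some $G$-neighbor of $v_i$ in $D$. Hence $\semiT(G')\le n+k$. For the reverse direction, let $D'$ be a semi-TD-set of $G'$ and set $D_G=D'\cap V(G)$ and $T=\{v_i\in V(G):N_G[v_i]\cap D_G=\emptyset\}$. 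Domination of $b_i$ forces $\{a_i,b_i\}\cap D'\neq\emptyset$ for every $i$. If $v_i\in T$ then $v_i$ is dominated in $G'$ only through $a_i$, forcing $a_i\in D'$; moreover the vertices within distance~$2$ of $a_i$ in $G'$ (other than $a_i$ itself) are exactly $\{v_i,b_i\}\cup N_G(v_i)$, and since $v_i\in T$ implies $D'\cap(\{v_i\}\cup N_G(v_i))=\emptyset$, the semitotal condition on $a_i$ forces $b_i\in D'$ as well. Hence each $v_i\in T$ contributes at least two pendant vertices to $D'$ and each $v_i\notin T$ at least one, giving
\[
|D'|\;\ge\;|D_G|+n+|T|.
\]
Since $D_G\cup T$ is clearly a dominating set of $G$, we have $\gamma(G)\le|D_G|+|T|$, and combining these inequalities yields $|D'|\ge n+\gamma(G)$, establishing the desired equivalence.

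\2

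The main obstacle is the reverse-direction bound: the pendant $P_3$ alone forces only $n$ vertices into any dominating set of $G'$, so the extra $\gamma(G)$ contribution must be extracted from the semitotal condition itself. The crucial leverage is the narrow distance-$2$ closed neighborhood of $a_i$ in $G'$, namely $\{a_i,v_i,b_i\}\cup N_G(v_i)$: whenever $v_i$ fails to be $G$-dominated by $D_G$, both the domination of $v_i$ and the semitotal condition on $a_i$ conspire to force both $a_i$ and $b_i$ into $D'$, translating each uncovered vertex of $G$ into an extra pendant cost in $G'$. This is exactly the place where semitotal domination departs from ordinary domination and is what makes the reduction go through.
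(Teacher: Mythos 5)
Your reduction is correct, but it takes a genuinely different route from the paper's. The paper also reduces from \textsc{Domination Decision} on bipartite graphs via a pendant gadget, but its gadget is a path on five vertices $x_iy_iz_iu_iw_i$ attached by the edge $v_iz_i$; this forces two vertices per gadget (one from $\{x_i,z_i\}$, one from $\{u_i,w_i\}$), and the intended pair $\{y_i,u_i\}$ already satisfies every semitotal requirement internally, so the equivalence $\gamma(G)\le k \Leftrightarrow \semiT(H)\le 2n+k$ collapses to an ordinary domination argument in which the semitotal condition plays no essential role in the reverse direction. Your gadget is a pendant $P_3$ ($v_ia_ib_i$), which forces only one vertex per gadget a priori; the missing $\gamma(G)$ term is then recovered by the charging argument on $T=\{v_i: N_G[v_i]\cap D_G=\emptyset\}$, where the narrowness of $N_2[a_i]=\{a_i,v_i,b_i\}\cup N_G(v_i)$ makes the semitotal condition on $a_i$ force $b_i\in D'$ precisely when $v_i$ is not dominated inside $G$. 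I checked the counting: $|D'|\ge |D_G|+n+|T|$ and $D_G\cup T$ dominating give $\semiT(G')\ge n+\gamma(G)$, matching the upper bound from $D\cup\{a_i\}$, so the equivalence holds. Your construction buys a smaller graph ($3n$ versus $6n$ vertices) and actually exercises the semitotal condition in the hardness proof, at the cost of a more delicate reverse direction; the paper's larger gadget buys a near-mechanical reverse direction that transfers verbatim to planar and chordal bipartite inputs, which is how the paper derives its subsequent corollary (your gadget preserves those classes too, so nothing is lost there).
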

\begin{proof}
Clearly, the \textsc{Semitotal Domination Decision} problem is in NP. To show the hardness, we give a polynomial reduction from the \textsc{Minimum Domination} problem. Given a non-trivial bipartite graph $G=(V,E)$, where $V=\{v_{1},v_{2},\ldots,v_{n}\}$, we construct a graph $H=(V_{H},E_{H})$ as follows: For each $i \in [n]$, add the path $x_{i}y_{i}z_{i}u_{i}w_{i}$ on five vertices and the edge $v_{i}z_{i}$ in $H$.

\begin{figure}
 \begin{center}
  \includegraphics[width=9cm, height=5cm]{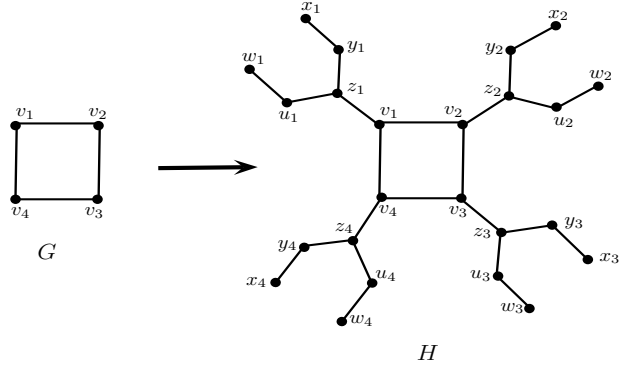}\\
 \caption{An illustration of the construction of $H$ from $G$  in the proof of Theorem~\ref{t:bipartite}.}
\label{f:bipartite}
\end{center}
\end{figure}

Formally, the vertex set $V_{H} = V(G) \cup \{x_{i},y_{i},z_{i},u_{i},w_{i}\mid i \in [n] \}$, and the edge set $E_{H} = E(G) \cup \{x_{i}y_{i},y_{i}z_{i},z_{i}u_{i},u_{i}w_{i},v_{i}z_{i}\}$. We note that since $G$ is a bipartite graph, so too is the graph~$H$. In the special case when $G$ is a $4$-cycle, the construction of the graph $H$ from the bipartite graph $G$ is illustrated in Fig.~\ref{f:bipartite}. Now to complete the proof, it suffices for us to prove the following claim:

\begin{claim}
\label{c:claim1}
The graph $G$ has a dominating set of cardinality at most~$k$ if and only if $H$ has a semi-TD-set of cardinality at most~$2n+k$.
\end{claim}
\begin{proof}
Let $D$ be a dominating set of $G$ of cardinality at most~$k$, and consider the set $D_H = D \cup \{u_{i},y_{i}\mid i \in [n]\}$. Since $D$ is a dominating set of $G$, and since $\{u_i,y_i\}$ dominates the five vertices on the added path $x_{i}y_{i}z_{i}u_{i}w_{i}$, the set $D_H$ is a dominating set of $H$. We note that $d_H(u_i,y_i) = 2$ for all $i \in [n]$. Further, if $v$ is an arbitrary vertex of $D$, then $v = v_i$ for some $i \in [n]$, and the vertex $v$ is at distance~$2$ from both $u_i$ and $y_i$ in $H$. Thus, $D_H$ is a semi-TD-set of $H$ of cardinality~$2n + |D| \le  2n+k$.

Conversely, suppose that $D'$ is a semi-TD-set  of $H$ of cardinality at most~$2n+k$. Since every semi-TD-set is also dominating set, in order to dominate the vertex $x_i$, we note that $x_{i}$ or $z_{i}$ must belong to $D'$ for each $i \in [n]$. Similarly, in order to dominate the vertex $w_i$, we note that $u_{i}$ or $w_{i}$ must belong to $D'$ for each $i \in [n]$.
We now define $D = D' \setminus \{x_{i},y_{i},u_{i},w_i \mid i \in [n] \}$. By our earlier observations, $|D| \le k$. Also, the set $D$ dominates all the vertices of $V(G)$. Moreover, a vertex $v_{i} \in V(G)$ is dominated by $z_{i}$ or a vertex from $N_{G}[v_{i}]$ for all $i \in [n]$. If $z_{i} \in D$ for some $i \in [n]$, then we simply replace the vertex $z_i$ in $D$ by the vertex $v_i$. Hence, we can choose the set $D$ so that $D$ is a subset of $V(G)$, implying that $D$ is a dominating set of $G$ of cardinality at most~$k$. This completes the proof of the Claim~\ref{c:claim1}.
\qed
\end{proof}

The proof of Theorem~\ref{t:bipartite} now follows from Theorem~\ref{t:known1} and Claim~\ref{c:claim1}.
\qed
\end{proof}

\medskip
We remark that in Theorem~\ref{t:bipartite}, if $G$ is planar graph (chordal bipartite graph), then the constructed graph $H$ is also planar (chordal bipartite). Hence, Theorem~\ref{t:known1} and Theorem~\ref{t:bipartite} imply the following result.

\begin{theorem}
The \textsc{Semitotal Domination Decision} problem is NP-complete for planar graphs and chordal bipartite graphs.
\end{theorem}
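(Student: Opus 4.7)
The plan is to reuse verbatim the polynomial reduction constructed in the proof of Theorem~\ref{t:bipartite}, since Claim~\ref{c:claim1} already supplies the crucial equivalence: $G$ has a dominating set of cardinality at most $k$ if and only if $H$ has a semi-TD-set of cardinality at most $2n+k$. All that remains is to verify that the construction preserves the two graph classes under consideration, namely that $H$ is planar whenever $G$ is planar, and $H$ is chordal bipartite whenever $G$ is chordal bipartite. Combined with Theorem~\ref{t:known1}, which guarantees NP-completeness of \textsc{Domination Decision} for both of these classes, this will yield the stated NP-completeness for \textsc{Semitotal Domination Decision}. Membership in NP is immediate since the dominating property and the within-distance-$2$ condition can both be checked in polynomial time.

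For planarity, I would fix a planar embedding of $G$ and, for each $i \in [n]$, draw the attached path $x_iy_iz_iu_iw_i$ together with the edge $v_iz_i$ inside an arbitrary face of the embedding incident to $v_i$. Because each attachment is a tree joined to the rest of $H$ only through the single edge $v_iz_i$, and the $n$ attachments are vertex-disjoint, no crossings are introduced, so $H$ inherits a planar embedding.

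For chordal bipartiteness, I would first extend a bipartition $(X,Y)$ of $G$ to a bipartition of $H$ by placing $z_i$, $x_i$, $w_i$ on the side opposite $v_i$ and $y_i$, $u_i$ on the same side as $v_i$; every edge of $H$ then properly crosses the bipartition, so $H$ is bipartite. The key structural point is that for each $i \in [n]$ the edge $v_iz_i$ is a bridge of $H$: its removal detaches the tree on $\{x_i,y_i,z_i,u_i,w_i\}$ from the remainder of $H$. Consequently, no cycle of $H$ can traverse any attachment vertex, so every cycle of $H$ is already a cycle of $G$; since $G$ is chordal bipartite, every such cycle of length at least $6$ has a chord in $G$, and this chord is also present in $H$. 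Hence $H$ is chordal bipartite.

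The only non-mechanical step is this cut-edge argument, but it is routine once one notices that each added five-vertex subgraph is a tree hanging off $G$ by a single bridge. With class preservation in hand, the theorem follows at once by combining Claim~\ref{c:claim1} with the NP-completeness statements of Theorem~\ref{t:known1}.
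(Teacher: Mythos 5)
Your proposal is correct and follows essentially the same route as the paper: the paper simply remarks that the reduction of Theorem~\ref{t:bipartite} preserves planarity and chordal bipartiteness and then invokes Theorem~\ref{t:known1} together with Claim~\ref{c:claim1}. Your write-up merely supplies the (correct) embedding, bipartition, and bridge arguments that the paper leaves implicit.
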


\subsection{Results for Split Graphs}

In this section, we prove the hardness result for the \textsc{Semitotal Domination Decision} problem in split graphs.

\begin{theorem}
\label{t:split}
The \textsc{Semitotal Domination Decision} problem is NP-complete for split graphs.
\end{theorem}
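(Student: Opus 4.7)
The plan is to reduce from the \textsc{Domination Decision} problem on split graphs, which is NP-complete by Theorem~\ref{t:known1}. Membership of the \textsc{Semitotal Domination Decision} problem in NP is clear (a candidate set is easily verified), so it suffices to give a polynomial reduction. Given a split graph $G=(V,E)$ with clique $C$ and independent set $I$, I would construct $H$ by adjoining just two new vertices $A$ and $p$: make $A$ adjacent to every vertex of $C$ and to $p$, and leave $p$ adjacent only to $A$. Then $H$ is split with clique $C\cup\{A\}$ and independent set $I\cup\{p\}$, the construction is clearly polynomial, and the equivalence to establish is $\gamma(G)\le k$ if and only if $\semiT(H)\le k+1$.

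For the forward direction I would start with a dominating set $D$ of $G$ with $|D|\le k$ and show that $D' := D\cup\{A\}$ is a semi-TD-set of $H$ of size at most $k+1$. Domination is immediate from the construction. For the within-distance-$2$ requirement, the key observation is that $A$ lies in the clique of $H$: any vertex of $D$ that is in $C$ is at distance $1$ from $A$, while any vertex in $I$ is at distance $2$ from $A$ via one of its $C$-neighbors (which exists since $G$ is assumed connected).

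For the reverse direction I would take a semi-TD-set $D'$ of $H$ with $|D'|\le k+1$ and argue in two stages. First, I would show that we may assume $A\in D'$: if $A\notin D'$ then $p\in D'$, and the distance-$2$ partner requirement on $p$ forces some $c\in C$ into $D'$ (because $p$'s distance-$2$ closed neighborhood is $\{p,A\}\cup C$); swapping $p$ for $A$ then preserves both the semi-TD property and the size. Second, setting $D := D'\cap V(G)$, I would have $|D|\le k$; if $D$ already dominates $G$ we are done. Otherwise some vertex $v\in V(G)$ is undominated by $D$ in $G$ but is dominated in $H$, which must happen via $A$, forcing $v\in C$. From this I would deduce the rigid structural consequences $D\cap C=\emptyset$ and $I\subseteq D$ (since no vertex outside $V(G)$ is adjacent to any $I$-vertex, each $I$-vertex is forced into $D'$). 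Picking any $u\in I\cap D$ together with a $C$-neighbor $c$ of $u$, the swap $D^*:=(D\setminus\{u\})\cup\{c\}$ produces a dominating set of $G$ of size at most $k$; the degenerate case $I=\emptyset$ (where $G$ is a clique and $\gamma(G)=1$) is easy to handle directly.

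The hard part, I expect, is the reverse direction: carrying out the two WLOG-style reductions in sequence, first moving $A$ into $D'$, and then pinning down the very rigid structure $D = I$ that arises exactly when $D'\cap V(G)$ fails to dominate $G$. This rigidity is precisely what makes the final clique-for-independent-set swap possible without any size blowup, and verifying it requires the repeated use of the observation that none of the new vertices $A$ and $p$ is adjacent to any vertex of $I$.
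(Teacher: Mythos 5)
Your reduction is correct, and it is genuinely different from (and leaner than) the one in the paper, although both reduce from the \textsc{Domination Decision} problem on split graphs via Theorem~\ref{t:known1}. The paper attaches a private vertex $x_i$ to each clique vertex and $y_j$ to each independent vertex, plus a five-vertex gadget $\{w,z,r,s,t\}$, enlarging the clique to $K\cup Y\cup\{s,w\}$; the pendant-like vertices $z$ and $r$ force $w$ and $s$ into any solution, these two serve as the universal distance-$2$ partners, and the $x_i,y_j$ are swapped back to $v_i,u_j$, giving the shift $k\mapsto k+2$. You achieve the same effect with a single clique-universal vertex $A$ forced by one pendant $p$, giving the shift $k\mapsto k+1$; the price is that $D'\cap V(G)$ need no longer dominate $G$ (since $A$ can absorb the job of dominating the clique), and your rigidity argument ($D\cap C=\emptyset$, hence $I\subseteq D$, hence $D=I$) followed by the swap of one independent vertex for one of its clique neighbors is exactly the right way to close that gap --- I checked each step (the WLOG move of $p$ to $A$, the forced containment $I\subseteq D$ because no new vertex is adjacent to $I$, and the final size-preserving swap) and all of them go through, using only the standing assumptions that $G$ is connected with at least two vertices and $k\ge 1$. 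The paper's heavier gadget buys nothing extra here beyond making the backward direction a routine vertex-by-vertex replacement; your version is shorter and equally rigorous.
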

\begin{proof}
Clearly, the \textsc{Semitotal Domination Decision} problem is in NP. To show the hardness, we give a polynomial reduction from the \textsc{Minimum Domination} problem in split graphs. Given a non-trivial split graph $G=(V,E)$ with split partition $(K,I)$, where $K$ is a clique and $I$ an independent set, we construct a graph $H=(V_{H},E_{H})$ as follows:

Let $K=\{v_{1},v_{2},\ldots,v_{p}\}$ and $I=\{u_{1},u_{2},\ldots,u_{q}\}$ and $p+q=n$. Let $X = \{x_1,x_2,\ldots,x_p\}$ and $Y = \{y_1,y_2,\ldots,y_q\}$ be two vertex disjoint sets such that $(X \cup Y) \cap V = \emptyset$. Further, let $W = \{w,z,r,s,t\}$ be a set of five new vertices that do not belong to $X \cup Y \cup V$. We now define the graph $H$ as follows. Let $V_{H} = V \cup W \cup X \cup Y$. Further, let $K_{H} = K \cup Y \cup \{s,w\}$ and $I_{H} = X \cup I \cup \{r,t,z\}$. Let $E_{K}$ denotes the set of edges required to make $H[K_{H}]$ a complete graph. We now define $E_{H} = E \cup E_{K} \cup \{v_{i}x_{i},x_{i}w \mid i \in [p] \} \cup \{u_{j}y_{j},y_{j}t\mid j \in [q] \} \cup\{rs,st,wz\}$. We note that $K_{H}$ is a clique in $H$ and $I_{H}$ is an independent set in $H$. Hence, the graph $H$ is also a split graph with split partition $(K_{H},I_{H})$. The construction of a graph $H$ from a split graph $G$ is illustrated in Fig.~\ref{f:split}. To complete the proof, it suffices for us to prove the following claim.

\begin{figure}
 \begin{center}
  \includegraphics[width=13cm, height=4.5cm]{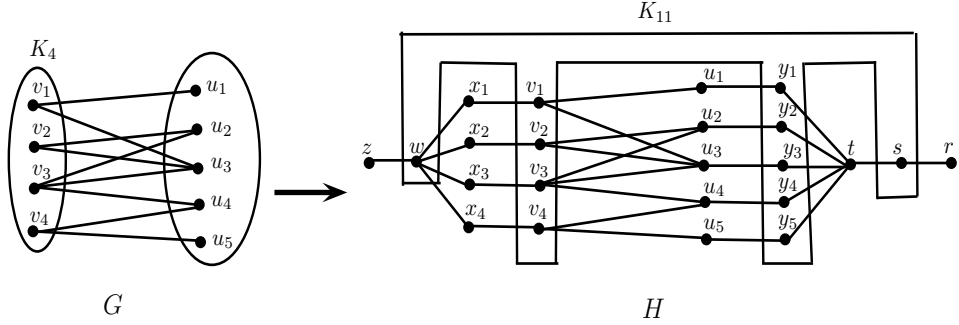}\\
 \caption{An illustration of the construction of $H$ from $G$ in the proof of Theorem~\ref{t:split}.}
\label{f:split}
\end{center}
\end{figure}

\begin{claim}
\label{c:claim2}
The graph $G$ has a dominating set of cardinality at most~$k$ if and only if $H$ has a semi-TD-set of cardinality at most~$k+2$.
\end{claim}
\begin{proof}
Every dominating set of $G$ can be extended to a semi-TD-set of $H$ by adding to it the two vertices $w$ and $s$. Thus, if $G$ has a dominating set of cardinality at most~$k$, then $H$ has a semi-TD-set of cardinality at most~$k+2$.

Conversely, let $D'$ be a semi-TD-set of cardinality at most~$k+2$ in $H$. In order to dominate the vertex $z$, we note that $z$ or $w$ must belong to $D'$. If $z \in D$, then we replace $z$ in $D'$ with the vertex $w$. Hence, we can choose the set $D'$ so that $z \in D'$. Analogously, we can choose the set $D'$ so that $s \in D'$. If $x_i \in D'$ for some $i \in [p]$, then we replace $x_i$ in $D'$ with the vertex $v_i$. Analogously, if $y_j \in D'$ for some $j \in [q]$, then we replace $y_j$ in $D'$ with the vertex~$u_j$. Hence, we can further choose the set $D'$ so that $D' \cap (X \cup Y) = \emptyset$. We now define $D = D' \setminus \{s,r,w,z\}$. By our choice of the set $D$, we note that $D \subseteq V = K \cup I$. Further, $|D| = |D'| - |\{w,s\}| \le k$ and the set $D$ dominates the set $I$. If there is a vertex $v_i$ not dominated by the set $D$ for some $i \in [p]$, then no neighbor of $v_i$ that belongs to the set $I$ would be dominated by $D$, a contradiction. Hence, the set $D$ dominated the set $K$, implying that $D$ is a dominating set of $G$ of cardinality at most~$k$. Thus, if $H$ has a semi-TD-set of cardinality at most~$k+2$, then $G$ has a dominating set of cardinality at most~$k$. This completes the proof of the Claim~\ref{c:claim2}.~\qed
\end{proof}

The proof of Theorem~\ref{t:split} now follows from Theorem~\ref{t:known1} and Claim~\ref{c:claim2}.
\qed
\end{proof}

\section{Algorithm for Interval Graphs}
\label{S:interval}

In this section, we present a polynomial time algorithm to compute a minimum semi-TD-set in an interval graph by reducing it to a shortest path problem in an acyclic directed weighted graph.

A linear time recognition algorithm exists for interval graphs, and for an interval graph an interval family can also be constructed in linear time~\cite{booth,golumbic}. Without loss of generality, we may assume that no two intervals share a common end point. In this paper, we denote by $G_I$ an interval graph associated with a collection of intervals $I$, and directly deal with intervals instead of vertices.

\begin{definition}
\label{d:defnGI}
Let $G_{I}$ be an interval graph associated with an interval model $I=\{I_{1},I_{2},\ldots,I_{n}\}$ of $G_I$, where $I_{i} = [a_{i},b_{i}]$ for $i \in [n]$. A semi-TD-set for $G_{I}$ corresponds to a subset $S$ of intervals in $I$ such that every interval not in $S$ overlaps with some interval in $S$, and for every interval $I_{p} \in S$, one of the following conditions is satisfied:
\\[-24pt]
\begin{enumerate}
  \item[\textbf{C1}:] There exists an interval $I_{q} \in S$ such that $I_{p}$ and $I_{q}$ overlap.
  \item[\textbf{C2}:] There exists a pair of intervals $I_{q}\in S$ and $I_{r}\in I$ such that $I_{r}$ intersects both $I_{p}$ and $I_{q}$.
\end{enumerate}

\end{definition}
We proceed further with the following lemma.

\begin{lemma}
\label{l:interval}
If an interval graph $G_{I}$ has no interval properly containing all the others, then there exists a subset $S$ of intervals of $I$ corresponding to a minimum semi-TD-set of $G_I$ such that no interval in $S$ is properly contained within any other interval in~$I$.
\end{lemma}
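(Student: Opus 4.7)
My approach is to start with any minimum semi-TD-set $S$ of $G_I$ and iteratively replace each interval of $S$ that is properly contained in some other interval of $I$ by a maximal one, while preserving both the cardinality of $S$ and the semi-TD property. The driving observation is that whenever $I_p \subseteq I_q$ one has $N[I_p] \subseteq N[I_q]$, and hence $d_{G_I}(x, I_q) \le d_{G_I}(x, I_p)$ for every interval $x$; in particular any interval within distance~$2$ of $I_p$ is also within distance~$2$ of $I_q$. Consequently, replacing $I_p$ by $I_q$ preserves domination (whatever $I_p$ dominates, $I_q$ dominates) and the semi-total condition ($I_q$ takes over any witness role of $I_p$, and any element of $S$ whose witness was $I_p$ can use $I_q$ as a replacement witness).

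Given a non-maximal $I_p \in S$, I choose a maximal $I_q \in I$ with $I_p \subsetneq I_q$, which exists by finiteness of $I$. If $I_q \notin S$, then by the preceding observation $S' = (S \setminus \{I_p\}) \cup \{I_q\}$ is a minimum semi-TD-set with strictly fewer non-maximal intervals. The delicate case is $I_q \in S$: here the candidate $S \setminus \{I_p\}$ is strictly smaller than $S$ and is automatically a dominating set (since $I_p \subseteq I_q \in S$), so the only obstruction to it being a semi-TD-set is the semi-total condition. Any element of $S$ other than $I_q$ whose witness was $I_p$ can switch to $I_q$ by the key observation, so the only possible failure is at $I_q$ itself, which occurs precisely when $I_p$ is the unique element of $S \setminus \{I_q\}$ lying within distance~$2$ of $I_q$.

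In this remaining sub-case I invoke the hypothesis to produce a maximal $I_r \in I$ with $I_r \neq I_q$ and $d_{G_I}(I_r, I_q) \le 1$; the swap $S'' = (S \setminus \{I_p\}) \cup \{I_r\}$ is then a minimum semi-TD-set with one fewer non-maximal interval. To produce $I_r$, I suppose for contradiction that every maximal interval distinct from $I_q$ is at distance at least~$2$ from $I_q$. The hypothesis gives some maximal $I_m \neq I_q$, so I fix a shortest $(I_q, I_m)$-path $I_q = J_0, J_1, \ldots, J_k = I_m$ in the connected graph~$G_I$. Then $J_1 \in N(I_q)$ must be non-maximal, and lifting it to a maximal $J_1^\ast \supseteq J_1$ yields either $J_1^\ast = I_q$, in which case $J_1 \subseteq I_q$ forces $J_2 \in N[J_1] \subseteq N[I_q]$ and contradicts the shortest-path assumption, or $J_1^\ast \neq I_q$, in which case $J_1^\ast$ overlaps $I_q$ (because $J_1$ does and $J_1 \subseteq J_1^\ast$), contradicting the supposition. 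Finally, $I_r \notin S$, since the unique-witness assumption forces $S \cap N[I_q] \subseteq \{I_p, I_q\}$ while $I_r$ is maximal (hence $\neq I_p$) and different from $I_q$.

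Iterating either swap, with the number of non-maximal intervals of $S$ as a strictly decreasing potential, terminates after finitely many steps at a minimum semi-TD-set containing no properly contained interval. The principal obstacle is the sub-case $I_q \in S$ with $I_p$ as $I_q$'s unique witness; this is exactly where the hypothesis that no interval properly contains all the others is indispensable, and without it the lemma would fail, since if one interval contained all others then every minimum semi-TD-set would have to include some non-maximal interval.
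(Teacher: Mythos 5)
Your proof is correct and follows essentially the same strategy as the paper's: repeatedly trade a properly contained interval of a minimum semi-TD-set for a containment-maximal one, using the key fact that $I_p \subseteq I_q$ forces $N[I_p] \subseteq N[I_q]$, with the identical case split according to whether the chosen maximal interval $I_q$ already lies in $S$ (your decreasing-potential iteration is just the constructive form of the paper's extremal choice of $S$). The one genuinely different ingredient is how you handle the hard sub-case where $I_q \in S$ and $I_p$ is its unique witness: the paper takes the \emph{longest} interval of $I \setminus S$ meeting $I_q$ and asserts it is not properly contained in anything, whereas you prove, via a shortest-path argument combining connectivity with the hypothesis that no interval contains all the others, that some containment-maximal interval $I_r \neq I_q$ must intersect $I_q$, and then verify $I_r \notin S$ from the unique-witness assumption. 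Your version of this step is the more watertight one: the paper's longest-interval choice could in principle still be properly contained in $I_q$ itself (e.g.\ a long interval nested inside $I_q$ beating a short interval that pokes out of it), so the claim that the swap strictly decreases the number of contained intervals really does need the kind of argument you supply. Your closing observation that the hypothesis is exactly what the lemma's conclusion requires is also on point.
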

\begin{proof}
Let $G_{I}$ be an interval graph which has no interval properly containing all the others. Among all subset of intervals of $I$, let $S$ be chosen so that the following holds.
\\[-22pt]
\begin{enumerate}
\item[(1)] The corresponding set, $D_S$, of vertices of $G_I$ is a minimum semi-TD-set of $G$.
\item[(2)] Subject to (1), the number of intervals in $S$ that are properly contained within other intervals in $I$ is a minimum.
\end{enumerate}

We show that no interval in $S$ is properly contained within any other interval in $I$. Suppose, to the contrary, that there is some interval $I_{k} \in S$ that is properly contained within some other interval in $I$. Among all such intervals containing $I_k$, let $I_r \in I$ be chosen to be of maximum length. We note that $I_k \subset I_r$.

Suppose firstly that $I_{r} \in I \setminus S$. We now consider the set $S' = (S \setminus \{I_k\}) \cup \{I_r\}$. We note that the corresponding set, $D_{S'}$, of vertices of $G_I$ is a minimum semi-TD-set of $G$. By the maximality of the interval $I_r$, we note that $I_r$ is not properly contained within some other interval in $I$. Hence, the number of intervals in $S'$ that are properly contained within other intervals in $I$ is one less than the number of intervals in $S$ that are properly contained within other intervals in $I$, contradicting our choice of the set $S$. Thus, $I_{r} \in S$.

If $I_{r}$ intersects with any other interval of $S$ other than $I_{k}$, then the set $S^* = S \setminus \{I_k\}$ is a subset of intervals in $I$ such that the set, $D_{S^*}$, of vertices of $G_I$ associated with the intervals $S'$ is a semi-TD-set of $G_{I}$ satisfying $|D_{S^*}| < |D_S|$, a contradiction. Hence, the interval $I_k$ is the only interval in $S$ that intersects with the interval $I_r$. On the other hand, since $G_{I}$ is connected, there must exist an interval $I_{h} \in I \setminus S$ such that $I_{h}$ intersects with $I_{r}$. If more than one such intervals exist, we choose the interval $I_{h}$ to be the largest such interval. We now consider the set $S'' = (S \setminus \{I_{k}\}) \cup \{I_{h}\}$. We note that the corresponding set, $D_{S''}$, of vertices of $G_I$ is a minimum semi-TD-set of $G$. By the maximality of the interval $I_r$, we note that $I_r$ is not properly contained within some other interval in $I$. Further, by the maximality of the interval $I_h$ and the observation that $I_k$ is the only interval in $S$ that intersects with the interval $I_r$, we note that $I_h$ is not properly contained within some other interval in $I$. Hence, the number of intervals in $S''$ that are properly contained within other intervals in $I$ is one less than the number of intervals in $S$ that are properly contained within other intervals in $I$, contradicting our choice of the set $S$. We deduce, therefore, that no interval in $S$ is properly contained within any other interval in $I$. This completes the proof of Lemma~\ref{l:interval}.
\qed
\end{proof}

\medskip
Note that if the interval graph $G_{I}$ has an interval, say $I_{k}$, which properly contains all the other intervals, then $I_{k}$ along with any other interval of $G_{I}$ forms a subset of two intervals that corresponds to a minimum semi-TD-set of $G_{I}$. We show next how to find a minimum semi-TD-set of an interval graph $G_I$ which satisfies the condition in the statement of Lemma~\ref{l:interval}.

Adopting our earlier notation, let $G_{I}$ be an interval graph associated with an interval model $I=\{I_{1},I_{2},\ldots,I_{n}\}$ of $G_{I}$, where $I_{i} = [a_{i},b_{i}]$ for $i \in [n]$. Let $V(G_I) =  \{1,2,\ldots,n\}$, where vertex $i$ corresponds to the interval $I_i$ for $i \in [n]$. We now define $I' = I \cup \{I_{0},I_{n+1}\}$ where $I_{0} = [a_{0},b_{0}]$, $I_{n+1}=[a_{n+1},b_{n+1}]$, $b_{0}< \min\{a_{k} \mid k \in [n] \}$ and $a_{n+1} > \max \{b_{k}\mid k \in [n]\}$. We also assume that the intervals in $I'$ are in increasing order of their left end points; that is, $a_{0} < a_{1} < \cdots <a_{n} <a_{n+1}$.

We now construct a directed graph $D = (V,A)$ with vertex set $V$ and arc set $A$ from $I'$ as follows. The vertices in $V$ correspond to the intervals in $I'$ which are not properly contained within other intervals. Thus, $V = \{k \mid I_{k} \in I'$ and $I_{k}$ is not contained in any other interval in $I'\}$. The arcs in $A$ are partitioned into two sets $A_1$ and $A_2$ (that is, $A = A_1 \cup A_2$ and $A_1 \cap A_2 =\emptyset$) as follows. If $I_i$ and $I_j$ are two overlapping intervals in $I'$ where $1 \le i < j \le n$, then we add the arc $(i,j)$ from vertex $i$ to vertex $j$ to the set $A_1$. We note that if $(i,j) \in A_1$, then $a_{i} < a_{j} <b_{i} < b_{j}$. We next define the set of arcs in $A_2$. Suppose that $I_i$ and $I_j$ are two non-overlapping intervals in $I'$ where $0 \le i < j \le n+1$, and so $b_i < a_j$. If there is no interval $I_h$ such that $b_i < a_h$ and $b_h < a_j$, then we add the arc $(i,j)$ from vertex $i$ to vertex $j$ to the set $A_2$. By construction, we note that the directed graph $D$ is acyclic.

We further partition the arcs of $A_2$ in two classes which we label as \emph{marked} and \emph{unmarked}. If $(i,j)$ is an arc in $A_2$ and there exists an interval $I_{h} \in I'$ that overlaps both intervals $I_{i}$ and $I_{j}$, then we call the arc $(i,j)$ a \emph{marked arc}, otherwise we call it an \emph{unmarked arc}. We are now in a position to state the following theorem.

\begin{theorem}
\label{t:interval}
Let $G_{I}$ be an interval graph associated with an interval model $I$ of $G$, and let the set $I'$ of intervals and the digraph $D$ be defined as before. Then any semi-TD-set of $G_I$ such that the associated set of intervals in $I$ has no interval properly contained within any interval in $I$ corresponds to a (directed) path between vertex~$0$ and vertex~$n+1$ in the digraph $D$ which does not include any two consecutive unmarked arcs of $D$.
\end{theorem}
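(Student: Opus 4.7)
The plan is to prove a bijective correspondence by sending a semi-TD-set $S = \{I_{i_1}, \ldots, I_{i_t}\}$ of $G_I$ satisfying the hypothesis of Lemma~\ref{l:interval}, with indices ordered so that $a_{i_1} < a_{i_2} < \cdots < a_{i_t}$, to the path $P \colon 0, i_1, i_2, \ldots, i_t, n+1$ in $D$. A useful preliminary observation I would establish first is that the no-containment hypothesis forces the right endpoints to be strictly increasing as well, so $b_{i_1} < \cdots < b_{i_t}$. Another observation is that both end arcs $(0, i_1)$ and $(i_t, n+1)$ always lie in $A_2$ and are always unmarked, since $I_0$ and $I_{n+1}$ overlap no interval.

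For the forward direction, I would first verify that each consecutive pair on $P$ is an arc of $D$. If $I_{i_k}$ and $I_{i_{k+1}}$ overlap, the pair lies in $A_1$; otherwise, any interval $I_h$ with $b_{i_k} < a_h$ and $b_h < a_{i_{k+1}}$ would be dominated by no interval of $S$ (by endpoint monotonicity, no $I_{i_m}$ with $m \le k$ extends far enough right, and none with $m \ge k+1$ begins early enough, to overlap $I_h$), so the $A_2$ condition follows from $S$ being a dominating set; the same argument handles $(0, i_1)$ and $(i_t, n+1)$. Next, to check the no-two-consecutive-unmarked-arcs property, I would establish the key monotonicity claim: if $I_{i_k}$ is within distance~$2$ of some $I_{i_m} \in S$ with $m > k$, then the arc $(i_k, i_{k+1})$ is not unmarked. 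Any witness $I_r$ overlapping both $I_{i_k}$ and $I_{i_m}$ must, by endpoint monotonicity, also overlap $I_{i_{k+1}}$, so either $I_{i_k}$ and $I_{i_{k+1}}$ already overlap (giving $A_1$) or the arc $(i_k, i_{k+1}) \in A_2$ is marked by $I_r$. The symmetric statement holds on the left side. The semi-total condition at each $I_{i_k}$ then prevents both arcs incident to $i_k$ on $P$ from being unmarked; the boundary cases $i_1$ and $i_t$ work because the end arcs are always unmarked, so the condition forces $(i_1, i_2)$ and $(i_{t-1}, i_t)$ to be not unmarked, which is exactly what the semi-total condition at $I_{i_1}$ and $I_{i_t}$ requires.

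For the backward direction, given a path with the stated property, I set $S = \{I_{i_1}, \ldots, I_{i_t}\}$. The $A_1 \cup A_2$ arcs between consecutive path vertices directly imply that every interval in $I \setminus S$ overlaps some $I_{i_k}$, so $S$ is a dominating set of $G_I$. The no-two-consecutive-unmarked-arcs property at $i_k$ (combined with the always-unmarked end arcs at $i_1$ and $i_t$) supplies, via a marked $A_2$ arc or an $A_1$ arc to $i_{k-1}$ or $i_{k+1}$, the required witness in $S$ within distance~$2$ of $I_{i_k}$. The no-containment condition from Lemma~\ref{l:interval} is automatic, since by construction only intervals not properly contained in others appear as vertices of $D$.

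The main obstacle will be the monotonicity/convexity argument in the forward direction: reducing the global statement ``some interval of $S$ is within distance~$2$ of $I_{i_k}$'' to the purely local statement ``the arc from $i_k$ to its immediate neighbor in $S$ along $P$ is not unmarked''. This reduction is what makes a local condition on a path exactly capture the semi-total property, and it relies crucially on the no-containment hypothesis, which yields the joint monotonicity of both endpoint sequences along $S$.
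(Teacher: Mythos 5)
Your proposal is correct and follows essentially the same route as the paper: both directions are proved by identifying the left-endpoint-sorted set $S$ with the vertex sequence of a $0$-to-$(n+1)$ path and translating the domination and distance-two conditions into the $A_1$/marked-$A_2$/unmarked-$A_2$ arc classification. Your explicit monotonicity lemma (that a distance-two witness for $I_{i_k}$ via any $I_{i_m}$ with $m>k$ already marks the arc to the \emph{immediate} successor $i_{k+1}$) is a welcome sharpening of a step the paper only asserts, namely that $i_{j+1}$ must be within distance two of one of $i_j$ or $i_{j+2}$ rather than merely of some vertex of $D_S$.
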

\begin{proof}
Let $P \colon 0,i_{1},i_{2},\ldots,i_{k},n+1$ be a (directed) path from vertex $0$ to vertex $n+1$ in $D$ which does not contain two consecutive unmarked arcs of $D$. We assume that $i_{0}=0$ and $i_{k+1}=n+1$. Define $S= \{I_{i} \mid$ vertex $i$ appears in path $P$ and $i \notin\{0,n+1\}\}$, that is, $S=\{I_{i_{1}},I_{i_{2}},\ldots,I_{i_{k}}\}$, and define $D_{S}=\{i_{1},i_{2},\ldots,i_{k}\}$. Next, we show that $D_{S}$ is a semi-TD-set of $G_{I}$.

Let $(i,j)$ be an arbitrary arc in $P$ where $i \ne 0$ and $j \neq n+1$. If there exists an interval $I_{\ell} = [a_{\ell},b_{\ell}]$ with $i<\ell<j$ which intersects neither $I_{i}$ nor $I_{j}$, then $b_{i} < a_{\ell} < b_{\ell} < a_{j}$, which contradicts the fact that $(i,j)$ is an arc in $D$. Hence, every interval $I_{\ell} = [a_{\ell},b_{\ell}]$ with $i<\ell<j$ intersects at least one of the intervals $I_{i}$ and $I_{j}$. Also, since $(0,i_{1})$ is an arc in $D$, every interval $I_{\ell} = [a_{\ell},b_{\ell}]$ with $1 \le \ell \le i_1 - 1$ intersects the interval $I_{0}$ or $I_{i_{1}}$.  However, the interval $I_{0}$ does not intersect any interval, implying that such an interval $I_\ell$ is intersected by $I_{i_{1}}$. Analogously, the interval $I_{i_{k}}$ intersects the interval $I_{\ell} = [a_{\ell},b_{\ell}]$ for all $\ell$ where $i_{k}+1 \le \ell \le n$. This proves that every interval in $I \setminus S$ is intersected by at least one interval in $S$.

Now, consider an arbitrary interval $I_{r} \in S$, and the corresponding vertex $i_{r}$ in $P$. So, $i_{r}$ will have one incoming arc $(i_{r-1},i_{r})$ and one outgoing arc $(i_{r},i_{r+1})$. Since these are two consecutive arcs of $P$, both of them cannot be unmarked arcs of $A_{2}$. Hence at least one of them must be either an arc of $A_{1}$ or a marked arc of $A_{2}$. If $(i_{r},i_{r+1})$ is an arc of $A_{1}$ or a marked arc of $A_{2}$, then the intervals $I_{r}$ and $I_{r+1}$ either intersect or there exists an interval $I_{h}$ such that $I_{h}$ intersects both intervals $I_{r}$ and $I_{r+1}$. If $(i_{r-1},i_{r})$ is an arc of $A_{1}$ or a marked arc of $A_{2}$, then the interval $I_{r-1}$ and $I_{r}$ either intersects or there exists an interval $I_{h}$ such that $I_{h}$ intersects both intervals $I_{r-1}$ and $I_{r}$. Hence, corresponding to every interval $I_{r}$ in $S$, the conditions \textbf{C1} and \textbf{C2} of Definition~\ref{d:defnGI} holds. Hence, $D_{S}$ is a semi-TD-set of $G_{I}$.

Conversely, let $D_S$ be a semi-TD-set of $G_I$, and $S$ be the set of intervals corresponding to $D_S$. Also assume that no interval in $S$ is properly contained within any other interval in~$I$. Adopting our earlier notation, recall that the intervals in $I'$ are in increasing order of their left end points; that is, $a_{0} < a_{1} < \cdots < a_{n} <a_{n+1}$. Let $D_S = \{i_1,i_2,\ldots,i_k\}$, where $i_1 < i_2 < \cdots < i_k$. Next, we show that there is a (directed) path, $P \colon 0,i_1,i_2,\ldots,i_k,n+1$ from vertex~$0$ to vertex~$n+1$ in the digraph $D$, and this path does not include any two consecutive unmarked arcs of $D$.

Since $b_0 < a_1$, we note that every interval $I_h$ where $h \ge 1$ satisfies $b_0 < a_h$. If there is an interval $I_h$ such that $h \ge 1$ and $b_h < a_{i_1}$, then the interval $I_h$ is not intersected by any interval in $S$, implying that the set $D_S$ is not a dominating set in $G_I$, contradicting the fact that $D_S$ is a semi-TD-set of $G_I$. Hence, there is no interval $I_h$ such that $b_0 < a_h$ and $b_h < a_{i_1}$, implying that $(0,i_1)$ is an (unmarked) arc in $D$.

Since $a_{n+1} > \max \{b_{k}\mid k \in [n]\}$, we note that every interval $I_h$ where $h \le n$ satisfies $b_h < a_{n+1}$. If there is an interval $I_h$ such that $h \le n$ and $b_{i_k} < a_h$, then the interval $I_h$ is not intersected by any interval in $S$, implying that the set $D_S$ is not a dominating set in $G_I$, a contradiction. Hence, there is no interval $I_h$ such that $b_{i_k} < a_h$ and $b_h < a_{n+1}$, implying that $(i_k,n+1)$ is an (unmarked) arc in $D$.

We show next that there is an arc $(i_j,i_{j+1})$ for every $j \in [k-1]$. Suppose, to the contrary, that the arc $(i_j,i_{j+1})$ does not exist for some $j \in [k-1]$. Thus, there is an interval $I_h$ such that $b_j < a_h$ and $b_h < a_{j+1}$. The interval $I_h$ is therefore not intersected by any interval in $S$, implying that the set $D_S$ is not a dominating set in $G_I$, a contradiction. Hence, there is no such interval $I_h$ such that $b_j < a_h$ and $b_h < a_{j+1}$, implying that either the intervals $I_{i_j}$ and $I_{i_{j+1}}$ overlap, in which case $(i_j,i_{j+1}) \in A_1$, or the intervals $I_{i_j}$ and $I_{i_{j+1}}$ do not overlap, in which case $(i_j,i_{j+1}) \in A_2$.

The above observations imply that there is a (directed) path $P \colon 0,i_1,i_2,\ldots,i_k,n+1$ between vertex~$0$ and vertex~$n+1$ in the digraph $D$. As observed earlier, the initial arc $(0,i_1)$ and the final arc $(i_k,n+1)$ of this path are unmarked arcs in $D$. It remains for us to show that no two consecutive arcs in the path $P$ are unmarked.

We note firstly that since the set $D_S$ is a semi-TD-set of $G_I$, the vertices $i_1$ and $i_2$ are within distance~$2$ apart in $G_I$, implying that either the intervals $I_{i_1}$ and $I_{i_{2}}$ overlap, in which case $(i_1,i_2) \in A_1$, or the intervals $I_{i_1}$ and $I_{i_{2}}$ do not overlap but there exists an interval $I_{h} \in I$ that overlaps both intervals $I_{i_1}$ and $I_{i_{2}}$, in which case $(i_1,i_2)$ is a marked arc in $A_2$. Analogously, the arc $(i_{k-1},i_{k}) \in A_1$ or $(i_{k-1},i_{k})$ is a marked arc in $A_2$. Hence, the second arc on the path $P$, namely the arc $(i_1,i_2)$ is not an unmarked arc, and the penultimate arc on the path $P$, namely the arc $(i_{k-1},i_{k})$ is not an unmarked arc.

We next consider two consecutive arcs $(i_j,i_{j+1})$ and $(i_{j+1},i_{j+2})$ on the path $P$ for some $j \in [k-2]$. Since the set $D_S$ is a semi-TD-set of $G_I$, the vertex $i_{j+1}$ is within distance~$2$ from at least one of the vertices $i_j$ and  $i_{j+2}$ in $G_I$. If $i_{j+1}$ is within distance~$2$ from the vertex $i_j$ in $G_I$, then the intervals $I_{i_j}$ and $I_{i_{j+1}}$ overlap, in which case $(i_j,i_{j+1}) \in A_1$, or the intervals $I_{i_j}$ and $I_{i_{j+1}}$ do not overlap but there exists an interval $I_{h} \in I$ that overlaps both intervals $I_{i_j}$ and $I_{i_{j+1}}$, in which case $(i_j,i_{j+1})$ is a marked arc in $A_2$. In particular, the arc $(i_j,i_{j+1})$ is not an unmarked arc. Analogously, if $i_{j+1}$ is within distance~$2$ from the vertex $i_{j+2}$ in $G_I$, then the arc $(i_{j+1},i_{j+2})$ is not an unmarked arc. The above observations imply that no two consecutive arcs in the path $P$ are unmarked. This completes the proof of Theorem~\ref{t:interval}.
\qed
\end{proof}

\medskip
By Lemma~\ref{l:interval} and Theorem~\ref{t:interval}, the problem of computing a minimum semi-TD-set in an interval graph is therefore equivalent to the problem of finding a shortest path in a directed graph with some sequencing constraints on certain arcs. However, we may reduce the latter problem to the ordinary shortest path problem in directed weighted graph. For this purpose, we construct another directed weighted graph $D'$ from $D$ in the following way:
\\[-24pt]
\begin{enumerate}
\item[$\bullet$] The vertex set of $D'$ is obtained by splitting each vertex $i$ of $D$ into two vertices $i_{\inR}$ and $i_{\outR}$, and so $|V(D')| = 2|V(D)|$.
\item[$\bullet$] The vertices $i_{\inR}$ and $i_{\outR}$ are joined by the arc $(i_{\inR},i_{\outR})$, whose length is zero.
\item[$\bullet$] For an arc in $A_1$ or a marked arc in $A_2$, say $(i,j)$, we add the arc $(i_{\outR},j_{\inR})$ with unit length in $D'$.
\item[$\bullet$] For an unmarked arc $(i,j)$ in $A_2$, we add the arc $(i_{\inR},j_{\outR})$ with unit length in $D'$.
\item[$\bullet$] For an arc $(0,i) \in A$, we add to $D'$ the arc $(0,i_{\outR})$ with zero length.
\item[$\bullet$] For an arc $(i,n+1) \in A$, we add the arc $(i_{\inR},n+1)$  with unit length in $D'$.
\end{enumerate}

Now, let $P$ be a shortest path from vertex $0$ to vertex $n+1$ in $D'$. Let $S = \{I_{i}\in I \mid$ either $i_{\inR}$ or $i_{\outR}$ or both belong to the path $P \, \}$. Then, the set of vertices in $G_I$ corresponding to the intervals in $S$ is a minimum semi-TD-set of $G_I$.

It can be seen that $D'$ can be constructed directly from $I$ in $O(n^{2})$ time. Since $D'$ is acyclic, a shortest path from vertex $0$ to vertex $n+1$ can also be computed in $O(n^{2})$ time. Hence, we are ready to state the main theorem of this section.

\begin{theorem}
\label{t:interval_main}
A minimum semi-TD-set in an interval graph with $n$ vertices can be computed in $O(n^{2})$ time.
\end{theorem}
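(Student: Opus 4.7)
My plan is to chain together Lemma~\ref{l:interval}, Theorem~\ref{t:interval}, and the gadget construction of $D'$ above, then verify the running-time claims. First, I would dispose of the trivial case: if $G_I$ contains an interval $I_k$ that properly contains all others, then (as noted just after the proof of Lemma~\ref{l:interval}) $I_k$ together with any other interval forms a semi-TD-set of size~$2$, which is minimum since no semi-TD-set can have fewer than two elements. Detecting this case takes only $O(n^2)$ time by comparing all pairs of endpoints, so we may thereafter assume no such interval exists, in which case Lemma~\ref{l:interval} applies and guarantees the existence of a minimum semi-TD-set whose associated intervals are pairwise non-contained.

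Next I would argue correctness of the $D'$ construction by exhibiting a length-preserving correspondence between directed $0$-to-$(n+1)$ paths in $D'$ and the paths described in Theorem~\ref{t:interval}. The key observation is that the vertex-splitting gadget at each $i$ has a unit-cost $A_1$/marked-$A_2$ arc entering at $i_{\inR}$, a unit-cost unmarked-$A_2$ arc leaving from $i_{\inR}$ (to some $j_{\outR}$), and a zero-cost internal arc only from $i_{\inR}$ to $i_{\outR}$; in particular there is no way to traverse two unmarked arcs in a row, since arriving at $i_{\outR}$ via an unmarked arc forces the next arc to be marked/$A_1$ (no arc leaves $i_{\outR}$ except the marked/$A_1$ ones). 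Conversely, any path in $D$ with no two consecutive unmarked arcs can be lifted to $D'$ by inserting the zero-cost internal arc $(i_{\inR},i_{\outR})$ whenever a marked/$A_1$ arc is followed by another marked/$A_1$ arc. Counting, the zero-cost initial arc $(0,i_{1,\outR})$ contributes $0$, each of the $k-1$ middle arcs contributes $1$, and the final arc $(i_{k,\inR},n+1)$ contributes $1$, giving total cost $k$; hence a shortest $0$-to-$(n+1)$ path in $D'$ has cost exactly $\semiT(G_I)$ and its vertex-set yields a minimum semi-TD-set via the set $S$ defined in the bulleted construction.

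Finally I would verify the two complexity bounds. The digraph $D$ has at most $n+2$ vertices and at most $O(n^2)$ arcs (over all ordered pairs of intervals), and deciding, for a pair $(i,j)\in A_2$, whether it is marked amounts to checking whether some $I_h\in I'$ overlaps both $I_i$ and $I_j$, which can be done in $O(n)$ time per arc; thus $D$, and then $D'$ (which only doubles the vertex count and adds $O(1)$ local gadget arcs per arc of $D$), is built in $O(n^2)$ time. Since $D'$ is acyclic, a single-source shortest-path computation runs in $O(|V(D')|+|E(D')|)=O(n^2)$ time via a topological-order relaxation. Adding the trivial-case check, the total running time is $O(n^2)$, proving the theorem.

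The only real subtlety I anticipate is making the bijection in the second paragraph airtight, in particular handling the boundary arcs $(0,i_1)$ and $(i_k,n+1)$: they are always unmarked in $D$ (since $I_0$ and $I_{n+1}$ overlap nothing), so the design choices of making $(0,i_{\outR})$ have cost $0$ and $(i_{\inR},n+1)$ have cost $1$ must be checked to be the ones that make path-length equal cardinality and that do not spuriously create a forbidden unmarked-unmarked pair at either end; once this bookkeeping is written out carefully, the rest of the argument is routine.
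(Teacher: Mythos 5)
Your overall route is exactly the paper's: reduce via Lemma~\ref{l:interval} and Theorem~\ref{t:interval} to a constrained shortest-path problem on $D$, encode the ``no two consecutive unmarked arcs'' constraint by the in/out vertex-splitting gadget of $D'$, and run a shortest-path computation on the acyclic digraph $D'$. Your analysis of the gadget (which arcs enter and leave $i_{\inR}$ versus $i_{\outR}$, why two unmarked arcs cannot be traversed consecutively, and why the cost of a $0$-to-$(n+1)$ path equals the number of selected intervals) is correct and in fact more careful than what the paper writes down.

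There is, however, one step in your running-time argument that does not go through as stated. You classify each arc $(i,j)\in A_2$ as marked or unmarked by scanning all of $I'$ for an interval $I_h$ overlapping both $I_i$ and $I_j$, at cost $O(n)$ per arc, and then conclude that $D$ is built in $O(n^2)$ time. But $A_2$ can have $\Theta(n^2)$ arcs: take two groups of $n/2$ pairwise-overlapping, pairwise non-nested intervals separated by a gap containing no interval; every pair consisting of one interval from each group is then an arc of $A_2$. Your per-arc scan therefore yields an $O(n^3)$ bound for the marking step, which is weaker than the theorem claims. The gap is easily repaired: an arc $(i,j)\in A_2$ (with $b_i<a_j$) is marked if and only if some interval $I_h$ satisfies $a_h<b_i$ and $b_h>a_j$, so if you precompute $M_i=\max\{b_h : a_h<b_i\}$ for every $i$ (all of which takes $O(n^2)$, or $O(n\log n)$ after sorting), then each arc's status is decided in $O(1)$ time by testing $M_i>a_j$, restoring the $O(n^2)$ total. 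With that adjustment your proof is complete and matches the paper's.
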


\section{Approximation Results}\label{S:apx}

In this section, we establish upper and lower bounds on the approximation ratio of the \textsc{Minimum Semitotal Domination} problem. We also show that the \textsc{Minimum Semitotal Domination} problem is APX-complete even for bipartite graphs with maximum degree $4$.

\subsection{Approximation Algorithm}
In this subsection, we propose a $2+3\ln (\Delta+1)$-approximation algorithm for the \textsc{Minimum Semitotal Domination} problem. Our algorithm make use of two important graph optimization problems: (i) \textsc{Minimum Domination} problem, and (ii) \textsc{Minimum Set Cover} problem.

We first recall the definition of \textsc{Minimum Set Cover} problem.
Let $X$ be any non-empty set and $\mathcal{F}$ be
a collection of subsets of $X$. A set $\mathcal{C }\subseteq \mathcal{F}$ is called a cover of $X$, if every element of $ X$ belongs to at least one element of $\cC$. The
\textsc{Minimum Set Cover} problem is to find a minimum cardinality cover of $X$. The following approximation results are already known for the \textsc{Minimum Domination} problem and the \textsc{Minimum Set Cover} problem.

\begin{theorem}{\rm (\cite{cormen})}
\label{t:Approx_dom_set}
The \textsc{Minimum Domination} problem in a graph with maximum degree $\Delta$ can be approximated with an approximation ratio of $1 + \ln(\Delta + 1)$.
\end{theorem}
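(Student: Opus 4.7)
The plan is to reduce the \textsc{Minimum Domination} problem to the \textsc{Minimum Set Cover} problem and then invoke the classical greedy approximation guarantee for set cover. Given a graph $G=(V,E)$ with maximum degree $\Delta$, I would build an instance of \textsc{Minimum Set Cover} by taking the universe $X=V$ and the family $\mathcal{F}=\{N_G[v] \mid v \in V\}$ of closed neighborhoods. Since $|N_G[v]| = 1+\deg_G(v) \le \Delta+1$ for every $v \in V$, each set in $\mathcal{F}$ has cardinality at most $\Delta+1$. Moreover, a subfamily $\{N_G[v] \mid v \in D\}$ covers $X=V$ if and only if $D$ dominates $G$, so optimal solutions of the two instances have the same cardinality and correspond under the obvious bijection $v \leftrightarrow N_G[v]$.

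Next, I would apply the standard greedy algorithm for \textsc{Minimum Set Cover}: repeatedly pick the vertex $v$ whose closed neighborhood contains the largest number of still-uncovered vertices of $V$, and iterate until every vertex of $V$ is covered. The textbook analysis (for instance the one in \cite{cormen}) shows that the greedy algorithm returns a cover of cardinality at most $H(d)\cdot |C^\ast|$, where $C^\ast$ is an optimum set cover and $H(d)=\sum_{i=1}^{d}\tfrac{1}{i}$ is the $d$-th harmonic number, with $d$ the maximum cardinality of any set in $\mathcal{F}$. Using the crude bound $H(d)\le 1+\ln d$ together with $d\le \Delta+1$ yields the approximation factor $1+\ln(\Delta+1)$.

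Finally, since the reduction preserves feasibility and cardinality in both directions, the cover produced by the greedy algorithm translates back to a dominating set of $G$ of cardinality at most $(1+\ln(\Delta+1))\,\gamma(G)$, which is the desired guarantee. The only point to be careful about is ensuring the maximum set size in $\mathcal{F}$ is $\Delta+1$ rather than $\Delta$; this is exactly because $N_G[v]$ includes $v$ itself, so $|N_G[v]|=\deg_G(v)+1$. Beyond this bookkeeping, the statement is a direct consequence of the classical greedy set-cover analysis, so no substantial obstacle is expected; this is why the theorem is cited from \cite{cormen} rather than proved here.
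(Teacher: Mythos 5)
Your argument is correct: the closed-neighborhood reduction to \textsc{Minimum Set Cover} together with the greedy analysis $H(d)\le 1+\ln d$ and $d\le\Delta+1$ is exactly the standard derivation underlying the result, and the paper itself gives no proof but simply cites it from \cite{cormen}. Your bookkeeping on $|N_G[v]|=\deg_G(v)+1\le\Delta+1$ is the right detail to flag, and nothing further is needed.
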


By Theorem~\ref{t:Approx_dom_set}, there exists an algorithm that outputs a dominating set, $D$, of a graph with maximum degree $\Delta$ in polynomial time and achieves the approximation ratio of $1 + \ln(\Delta+1)$; that is, $|D| \le (1 + \ln(\Delta + 1)) \gamma(G)$. Let APPROX-DOM-SET be such an approximation algorithm.

\begin{theorem}{\rm (\cite{cormen})}
\label{t:Min_set_cover}
The \textsc{Minimum Set Cover} problem for the instance $(X,\cF)$ can be approximated with an approximation ratio of $1 + \ln |S|$, where $S$ is a set of maximum cardinality in~$\cF$.
\end{theorem}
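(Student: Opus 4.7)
The plan is to analyze the standard greedy algorithm for \textsc{Minimum Set Cover}: repeatedly pick from $\cF$ a set covering the largest number of currently uncovered elements of $X$, add it to a running cover $\cC$, and iterate until every element of $X$ is covered. The target inequality is $|\cC| \le (1+\ln|S_{\max}|)\,|\cC^*|$, where $\cC^*$ denotes an optimum cover and $S_{\max}$ is a set of maximum cardinality in~$\cF$.

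The core device I would use is a cost-distribution (charging) argument. Whenever the greedy step adds a set $T$ that newly covers $t$ elements of $X$, I assign a cost of $1/t$ to each of those $t$ newly covered elements. The sum of all assigned costs over $X$ then equals $|\cC|$ exactly, so it suffices to bound this total by $(1+\ln|S_{\max}|)\,|\cC^*|$.

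Next, I would estimate the total cost charged to the elements of an arbitrary $S \in \cF$. Order $S = \{s_1,\ldots,s_{|S|}\}$ by the iteration at which each element is first covered. Just before $s_i$ is covered, the elements $s_i, s_{i+1},\ldots,s_{|S|}$ are all still uncovered, so $S$ itself, if chosen at that step, would cover at least $|S|-i+1$ fresh elements; by the greedy selection rule, the set actually picked covers at least that many, and hence contributes a charge of at most $1/(|S|-i+1)$ to $s_i$. Summing the resulting harmonic-type series over $i=1,\ldots,|S|$ gives a total cost on the elements of $S$ of at most $1+\ln|S| \le 1+\ln|S_{\max}|$.

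To finish, since $\cC^*$ covers $X$, every element of $X$ lies in at least one member of $\cC^*$, so the total assigned cost is at most $\sum_{S \in \cC^*}(1+\ln|S_{\max}|) = |\cC^*|(1+\ln|S_{\max}|)$, which equals (an upper bound on) $|\cC|$ and yields the claimed approximation ratio. The one subtle step is the per-set bound: it is crucial to order the elements of $S$ by their first-cover time and to charge against the number of \emph{newly} covered elements rather than the total number already covered; once that bookkeeping is in place, the greedy choice rule delivers the required $1/(|S|-i+1)$ bound automatically.
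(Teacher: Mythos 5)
This theorem is quoted from the cited reference (Cormen et al.) and the paper supplies no proof of its own; your argument is the standard greedy charging analysis from that source, and it is correct — including the key point of ordering the elements of each $S\in\cF$ by first-cover time so that the greedy rule forces a charge of at most $1/(|S|-i+1)$ on the $i$th element, giving the harmonic bound $H(|S|)\le 1+\ln|S|$ per set of an optimal cover.
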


By Theorem~\ref{t:Min_set_cover}, there exists an algorithm that outputs a set cover, $C$, of instance $(X, \cF)$ in polynomial time and achieves the approximation ratio of $1 + \ln p$, where $p = \max \{|S| \colon S \in \cF\}$. Let APPROX-SET-COVER be such an approximation algorithm.

Next, we propose an algorithm APPROX-SEMI-TOTAL-DOM-SET to compute an approximate solution of the \textsc{Minimum Semitotal Domination} problem. Our algorithm works in two phases: In the first phase, we compute a dominating set $D$ of the given graph $G$ using algorithm APPROX-DOM-SET. In the second phase, we find an additional set $T$ of vertices such that $D \cup T$ becomes a semi-TD-set of $G$. We select the set $T$ in such a way that for every vertex  $v \in D$, there exists a vertex $w \in D \cup T$ such that $d(v,w) \le 2$. To select this set $T$, we first form an instance $(X,\cF)$ of the \textsc{Minimum Set Cover} problem, and then use the algorithm APPROX-SET-COVER to compute a set cover $\cC$ of $X$. Thereafter we construct $T$ from $\cC$ such that $D \cup T$ becomes a semi-TD-set of $G$.

Assume that $D$ is a dominating set of the given graph $G$ obtained by the algorithm APPROX-DOM-SET. Next, we illustrate the construction of an instance $(X,\cF)$ of the \textsc{Minimum Set Cover} problem. Recall that the distance $d_G(v,S)$ between a vertex $v$ and a set $S$ of vertices in $G$ is the minimum distance from~$v$ to a vertex of $S$ in $G$.
%Since $D$ is a dominating set of $G$, we note that every vertex of $D$ is within distance~$3$ from some other vertex of $D$.
Let $X$ be the subset of all vertices $v$ in $D$ such that $d_G(v,D \setminus \{v\}) \ge 3$.
%If $X \ne \emptyset$, let $X = \{x_{1},\ldots,x_{p}\}$.

Recall that for a vertex $v$ in the graph $G$, the set of all vertices within distance~$2$ from $v$ is denoted by $N_2[v]$; that is, $N_2[v] = \{v\} \cup N_1(v) \cup N_2(v)$. Let $V \setminus D = \{u_{1},\ldots,u_{q}\}$. Let $S_{j} = N_2[u_j] \cap X$ for $j \in [q]$. Thus, $S_j$ is the set of all vertices in $X$ at distance~$1$ or~$2$ from the vertex~$u_j$ in $G$. We now define $\cF = \{S_{1},\ldots,S_{q}\}$. Thus, $(X,\cF)$ forms an instance of the \textsc{Minimum Set Cover} problem. Now, let $\cC$ be the set cover obtained by the algorithm  APPROX-SET-COVER. Let $T=\{u_{j}\in V \setminus D \mid S_{j}\in \cC\}$. Note that for each vertex $x \in X$, there must exist at least one vertex $u_{j} \in T$ such that $x \in N_2(u_{j})$, because $\cC$ is a set cover of $X$. Thus, $D\cup T$ is a semi-TD-set of $G$. Next, we summarize our approximation algorithm in \textbf{Algorithm 1}.

\medskip
\begin{algorithm}[H]
\caption{\textbf{:} APPROX-SEMI-TOTAL-DOM-SET(G)}
 \textbf{Input:} A graph $G=(V,E)$.\\
\textbf{Output:} A semi-TD-set $D_{\st}$ of $G$.\\
\Begin{
$D=$APPROX-DOM-SET$(G)$;\\
Construct an instance $(X,\cF)$ of \textsc{Min Set Cover} problem;\\
 \eIf {$X=\emptyset$}{
 $D_{\st}=D$;}
{
 $\cC=$ SET-COVER$(X,\cF)$;\\
 $T=\{u_{j}\in V\setminus D \mid S_{j}\in \cC\}$;\\
 $D_{\st}=D\cup T$;

} return $D_{\st}$; }
\end{algorithm}

We note that the algorithm APPROX-SEMI-TOTAL-DOM-SET produces a semi-TD-set of a given graph $G$ in polynomial time. We are now in a position to prove the following theorem.

\begin{theorem} \label{th:stdapprox}
The \textsc{Minimum Semitotal Domination} problem in a graph $G$ with maximum degree $\Delta$ can be approximated with an approximation ratio of $2+3\ln(\Delta+1)$.
\end{theorem}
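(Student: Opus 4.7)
The plan is to bound $|D_{\st}| = |D| + |T|$ separately against $\semiT(G)$ by invoking the two approximation oracles in turn. First I would verify that $D_{\st}$ is indeed a semi-TD-set: $D$ already dominates $G$, and for any $v \in D$, either $d_G(v,D\setminus\{v\}) \le 2$ (so $v$ has a companion already inside $D$) or else $v \in X$, in which case the covering property of $\cC$ supplies a $u_j \in T$ with $d_G(v,u_j)\le 2$.

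For the $|D|$ term I would quote Theorem~\ref{t:Approx_dom_set} and combine it with $\gamma(G) \le \semiT(G)$ from Observation~\ref{ob:chain}, obtaining
\[
|D| \le (1+\ln(\Delta+1))\,\gamma(G) \le (1+\ln(\Delta+1))\,\semiT(G).
\]

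The substantive step is bounding $|T|=|\cC|$, which splits into (i) showing that the set-cover instance $(X,\cF)$ has optimum value at most $\semiT(G)$, and (ii) bounding the maximum set size $p=\max_j|S_j|$ in $\cF$. For (i), let $D^*$ be a minimum semi-TD-set of $G$. For every $x \in X$, pick $y_x \in D^* \setminus \{x\}$ with $d_G(x,y_x)\le 2$; since $x \in X$ forces $d_G(x,D\setminus\{x\})\ge 3$, the vertex $y_x$ cannot belong to $D$, so $y_x = u_j$ for some $j \in [q]$ and thus $x \in S_j$. Hence the collection $\{S_j : u_j \in D^*\setminus D\}$ covers $X$ using at most $|D^*| = \semiT(G)$ sets. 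For (ii), since $u_j \in V\setminus D$ and $X \subseteq D$, we have $u_j \notin X$, so $S_j \subseteq N_2[u_j]\setminus\{u_j\}$, which has size at most $\Delta + \Delta(\Delta-1) = \Delta^2 \le (\Delta+1)^2$.

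Combining (i) and (ii) with Theorem~\ref{t:Min_set_cover} yields $|T| \le (1+\ln p)\,\semiT(G) \le (1+2\ln(\Delta+1))\,\semiT(G)$, and adding this to the bound on $|D|$ gives $|D_{\st}| \le (2+3\ln(\Delta+1))\,\semiT(G)$, as desired. The one genuinely delicate point is step (i) --- specifically the argument that $y_x \notin D$, which is precisely what the distance-$3$ condition in the definition of $X$ is engineered to force; once that observation is secured, the rest of the proof is routine logarithmic bookkeeping.
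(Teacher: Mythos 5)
Your proof is correct and follows essentially the same route as the paper: bound $|D|$ via the domination oracle and $\gamma(G)\le\semiT(G)$, bound $|T|=|\cC|$ via the set-cover oracle with $p\le\Delta^2$, and add. Your step (i) — exhibiting an explicit cover $\{S_j : u_j\in D^*\setminus D\}$ of $X$ and using the distance-$3$ condition to show the witness $y_x$ lies outside $D$ — is in fact a more careful justification of the bound $|\cC^*|\le\semiT(G)$, which the paper only asserts in one sentence.
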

\begin{proof}
In order to prove the theorem, we show that the semi-TD-set, $D_{\st}$, produced by our algorithm APPROX-SEMI-TOTAL-DOM-SET is an approximate solution of the \textsc{Minimum Semitotal Domination} problem with an approximation ratio of $2+3\ln(\Delta+1)$; that is,
\[
|D_{\st}| \le (2+3\ln(\Delta+1)) \, \semiT(G).
\]

The algorithm APPROX-DOM-SET produces a dominating set, $D$, of $G$ with an approximation ratio $1 + \ln(\Delta + 1)$; that is,
\[
|D| \le (1 + \ln(\Delta + 1)) \gamma(G).
\]

Similarly, for the instance $(X,\cF)$ of the \textsc{Minimum Set Cover} problem, the algorithm APPROX-SET-COVER produces a cover $\cC$ of $X$ with an approximation ratio $(1+\ln p)$, where $p$ is the cardinality of largest set in $\cF$. Since the graph $G$ has maximum degree $\Delta$, we note that $|N_1(u_j)| \le \Delta$ and $|N_2(u_j)| \le \Delta(\Delta - 1)$ for each $j \in [q]$. Thus, $|S_{j}| \le \Delta^2$ for all $j \in [q]$. Thus, the cardinality of  each set of $\cF$ is at most~$\Delta^{2}$, implying that $p \le \Delta^{2}$. Thus if $\cC^{*}$ is an optimal cover of the instance $(X,\cF)$, then $|\cC|\le (1+\ln(\Delta)^{2}) \cdot |\cC^{*}|$, that is,
\[
|\cC|\le (1+2\ln(\Delta))\cdot |\cC^{*}|.
\]

Recall that $T = \{u_j \in V \setminus D \mid S_{j} \in C\}$, and so $|T| = |\cC|$. Let $T^{*}=\{u_{i}\in V \setminus D \mid S_{i}\in C^{*}\}$. Then, $|T^{*}|=|\cC^{*}|$, and $|T^{*}|$ denotes the minimum number of vertices needed to extend the dominating set $D$ of $G$ to a semi-TD-set of $G$. We note that the minimum number of vertices needed to extend a set of vertices to a semi-TD-set of $G$ is no more than the semitotal domination number of $G$. Hence, $|C^{*}| \le \semiT(G)$. By Observation~\ref{ob:chain}, $\gamma(G) \le \semiT(G)$. Hence,
\begin{eqnarray}
\nonumber |D_{\st}| = |D\cup T| &=& |D|+|T|\\
\nonumber &=& |D|+|\cC|\\
\nonumber &\le & (1+\ln(\Delta+1))\cdot \gamma(G) + (1+2\ln(\Delta))\cdot |\cC^{*}|\\
\nonumber &\le & (2+3\ln(\Delta+1)) \cdot \semiT(G).
\end{eqnarray}

The semi-TD-set, $D_{\st}$, produced by the algorithm APPROX-SEMI-TOTAL-DOM-SET is therefore an approximate solution of the \textsc{Minimum Semitotal Domination} problem with an approximation ratio of $2+3\ln(\Delta+1)$. This completes the proof of the Theorem~\ref{th:stdapprox}.~\qed
\end{proof}

\medskip
Since the \textsc{Minimum Semitotal Domination} problem in a graph with maximum degree~$\Delta$ admits an approximation algorithm that achieves the approximation ratio of $2+3\ln(\Delta+1)$, which is a poly-logarithmic function of $\Delta$, we immediately have the following corollary of Theorem~\ref{th:stdapprox}. %Recall that the class APX (an abbreviation of ``approximable") is the set of NP optimization problems that allow polynomial-time approximation algorithms with approximation ratio bounded by a constant.

\begin{cor}
The \textsc{Minimum Semitotal Domination} problem is in the class log-APX.
\end{cor}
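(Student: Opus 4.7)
The plan is to derive this corollary directly from Theorem~\ref{th:stdapprox} by observing that $\ln(\Delta+1)$ is bounded above by a logarithm of the input size. Recall that a minimization problem lies in the class log-APX if it admits a polynomial-time approximation algorithm whose performance ratio is $O(\log |I|)$, where $|I|$ denotes the size of the instance. For the \textsc{Minimum Semitotal Domination} problem the natural size parameter is $n = |V(G)|$.

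First I would note that in any simple graph $G=(V,E)$ the maximum degree satisfies $\Delta \le n-1$. Consequently, $\ln(\Delta+1) \le \ln n$, so the approximation ratio $2 + 3\ln(\Delta+1)$ guaranteed by Theorem~\ref{th:stdapprox} is bounded above by $2 + 3\ln n$, which is $O(\log n)$. Next I would observe that the algorithm APPROX-SEMI-TOTAL-DOM-SET runs in polynomial time: both APPROX-DOM-SET and APPROX-SET-COVER are polynomial-time algorithms, the construction of the instance $(X,\mathcal{F})$ of the \textsc{Minimum Set Cover} problem requires only computing $N_2[u_j]$ for each $u_j \in V\setminus D$, and the assembly of $T$ from $\mathcal{C}$ is a straightforward linear pass.

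Combining these two observations, the \textsc{Minimum Semitotal Domination} problem admits a polynomial-time approximation algorithm whose ratio is $O(\log n)$, which is exactly the defining condition of the class log-APX. Hence the problem belongs to log-APX, which proves the corollary.

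There is no real obstacle here: the statement is essentially a re-packaging of Theorem~\ref{th:stdapprox} under the standard definition of log-APX. The only subtlety worth spelling out is the passage from $\ln(\Delta+1)$ to $\ln n$, which uses the trivial bound $\Delta \le n - 1$ valid for any graph on $n$ vertices.
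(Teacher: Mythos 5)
Your proposal is correct and follows essentially the same route as the paper, which derives the corollary immediately from Theorem~\ref{th:stdapprox} by noting that the ratio $2+3\ln(\Delta+1)$ is logarithmically bounded. Your explicit use of $\Delta \le n-1$ to convert the bound into one in terms of the instance size $n$ is a small but welcome precision that the paper leaves implicit.
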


\subsection{Lower bound on approximation ratio}
In this subsection, we establish a lower bound on the approximation ratio of the \textsc{Minimum Semitotal Domination} problem. To obtain the lower bound, we provide an approximation preserving reduction from the \textsc{Minimum Domination} problem. The following approximation harness result is already known for the \textsc{Minimum Domination} problem.

\begin{theorem}{\rm (\cite{chlebik})}
\label{t:dom-hard} For a graph $G = (V, E)$, the \textsc{Minimum Domination} problem cannot be approximated within $(1 - \epsilon)\ln |V| $ for
any $\epsilon > 0$ unless NP $\subseteq$  DTIME $(|V|^{O(\log  \log  |V|)})$.
\end{theorem}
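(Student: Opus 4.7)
The statement is attributed to Chlebik and Chlebikova, so my plan sketches the standard route used in their work rather than any construction specific to semitotal domination. The overall approach is to transfer Feige's celebrated $(1-\epsilon)\ln |X|$-inapproximability of Minimum Set Cover, which holds under precisely the DTIME assumption stated in the theorem, to the domination setting by means of a gap-preserving reduction. The main challenge is not simply to obtain an approximation-preserving reduction but to ensure that the constant in front of the logarithm remains exactly $1$, so every multiplicative and additive slackness in the construction must be absorbed into the arbitrarily small slack~$\epsilon$.

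The reduction I would use is the classical one from Set Cover to Dominating Set. Given an instance $(X,\cF)$ of Minimum Set Cover, construct a graph $G$ whose vertex set is $X \cup \cF$; add an edge $xS$ whenever $x \in X$, $S \in \cF$ and $x \in S$; and add all further edges needed to make $\cF$ a clique in~$G$. Any dominating set of $G$ can be modified, without increasing its size, so as to lie entirely inside $\cF$ (replace each element-vertex in the dominating set by any adjacent set-vertex), and such a subfamily of $\cF$ is precisely a set cover of $X$. Hence $\gamma(G) = \mathrm{OPT}_{SC}(X,\cF)$, while the number of vertices of $G$ equals $|X|+|\cF|$. This already yields some logarithmic hardness for domination, but not yet with the tight constant~$1$.

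The main obstacle is the logarithmic calibration. To transport $(1-\epsilon)\ln |X|$-hardness for Set Cover into $(1-\epsilon)\ln |V(G)|$-hardness for Domination one needs $\ln |V(G)| = (1+o(1))\ln |X|$, equivalently $|\cF| \le |X|^{1+o(1)}$. This is the point where the Chlebik--Chlebikova refinement enters: one invokes Feige's hardness on Set Cover instances in which the number of sets is almost linear in the universe size, which is obtained by choosing the parameters of the underlying multi-prover protocol so that the produced set family stays polynomial (indeed nearly linear) in $|X|$. Once this is arranged, any polynomial-time $(1-\epsilon)\ln |V(G)|$-approximation for Minimum Domination would yield, through the reduction above, a polynomial-time $(1-\epsilon')\ln |X|$-approximation for Minimum Set Cover with $\epsilon' > 0$, contradicting Feige's bound under the assumption NP $\not\subseteq$ DTIME$(|V|^{O(\log \log |V|)})$ and establishing the theorem.
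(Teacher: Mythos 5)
The paper gives no proof of this statement: Theorem~\ref{t:dom-hard} is imported verbatim from \cite{chlebik} as a known result, so there is no in-paper argument to compare yours against. Your sketch is nevertheless a faithful outline of how such bounds are established in the literature: Feige's $(1-\epsilon)\ln|X|$ inapproximability of \textsc{Set Cover} under the same DTIME hypothesis, the classical reduction in which $V(G)=X\cup\cF$ with $\cF$ made into a clique (so that $\gamma(G)$ equals the optimum cover size), and the crucial calibration $|\cF|\le |X|^{1+o(1)}$ needed so that $\ln|V(G)|=(1+o(1))\ln|X|$ and the leading constant~$1$ survives. You correctly identify that last calibration as the nontrivial step rather than the reduction itself; the only caveat is that this reconstructs the cited source's argument, not anything the present paper proves.
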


Now, we are ready to prove the main theorem of this subsection.

\begin{theorem} \label{t:approx-hard}
 For a graph $G = (V, E)$, the \textsc{Minimum Semitotal Domination} problem cannot be approximated within
$(1 - \epsilon)\ln |V| $ for any $\epsilon > 0$ unless NP $\subseteq$  DTIME $(|V|^{O(\log  \log  |V|)})$.
\end{theorem}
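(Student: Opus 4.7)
The plan is to give an approximation-preserving reduction from the \textsc{Minimum Domination} problem to the \textsc{Minimum Semitotal Domination} problem, and then appeal to Theorem~\ref{t:dom-hard} to derive the contradiction. The reduction I would use is exactly the one already developed in the proof of Theorem~\ref{t:bipartite} (Claim~\ref{c:claim1}): given a graph $G=(V,E)$ with $|V|=n$, form $H$ by attaching to every $v_i \in V$ the pendant path $x_iy_iz_iu_iw_i$ together with the connecting edge $v_iz_i$. Then $|V(H)|=6n$, $\semiT(H)=\gamma(G)+2n$, and the proof of Claim~\ref{c:claim1} already shows how to extract, in polynomial time, a dominating set $D_G$ of $G$ with $|D_G|\le |D_H|-2n$ from any semi-TD-set $D_H$ of $H$.

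For the contrapositive, I would suppose for contradiction that, for some fixed $\epsilon>0$, there is a polynomial-time $(1-\epsilon)\ln|V|$-approximation algorithm $\mathcal{A}$ for MSDP. Given an arbitrary input graph $G$, the plan is to build $H$ as above, run $\mathcal{A}$ on $H$ to obtain a semi-TD-set $D_H$ with $|D_H|\le (1-\epsilon)\ln(6n)\,\semiT(H)$, and then extract the dominating set $D_G$ from $D_H$ as in Claim~\ref{c:claim1}. Substituting $\semiT(H)=\gamma(G)+2n$ together with $|D_G|\le |D_H|-2n$ gives
\[
|D_G| \;\le\; (1-\epsilon)\ln(6n)\,\gamma(G) \;+\; 2n\bigl[(1-\epsilon)\ln(6n)-1\bigr].
\]
The goal is to convert this estimate into a $(1-\epsilon')\ln|V(G)|$-approximation algorithm for the \textsc{Minimum Domination} problem for some $\epsilon'>0$ depending on $\epsilon$, which directly contradicts Theorem~\ref{t:dom-hard}.

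The main obstacle I expect is absorbing the additive $+2n$ overhead coming from the pendant paths into a multiplicative approximation factor. I would handle this with two observations. First, $\ln|V(H)|=\ln n+\ln 6$ differs from $\ln|V(G)|$ by only an additive constant, so any fixed slack in passing from $(1-\epsilon)$ to $(1-\epsilon')$ can swallow that $\ln 6$. Second, Theorem~\ref{t:dom-hard} may be invoked on a restricted class of instances where $\gamma(G)=\Omega(n)$; this can be ensured by a trivial polynomial padding step, in which one attaches to $G$ a forced-domination gadget (such as a collection of $\Theta(n)$ pendant $P_3$'s) whose middle vertices must belong to every dominating set, thereby inflating $\gamma(G)$ to linear size while increasing $|V(G)|$ by only a constant factor and leaving the $(1-\epsilon)\ln|V|$-hardness of MDP intact. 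Once $\gamma(G)=\Omega(n)$, the additive $2n[(1-\epsilon)\ln(6n)-1]$ contributes only a constant multiple of $(1-\epsilon)\ln(6n)\,\gamma(G)$, and a direct calculation yields $|D_G|\le (1-\epsilon')\ln|V(G)|\cdot\gamma(G)$ with $\epsilon'\to 0$ as $\epsilon\to 0$. Since $\epsilon>0$ is arbitrary, this establishes the theorem.
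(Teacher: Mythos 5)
There is a genuine gap, and it is exactly at the point you flagged as the ``main obstacle.'' Your reduction inflates the optimum by an additive $2n$ term, i.e.\ $\semiT(H)=\gamma(G)+2n$, and such a reduction cannot preserve logarithmic inapproximability: the bound you derive, $|D_G|\le(1-\epsilon)\ln(6n)\,\gamma(G)+2n\bigl[(1-\epsilon)\ln(6n)-1\bigr]$, has a second term of order $n\log n$, which is vacuous since $V(G)$ itself is always a dominating set of size $n$. Your proposed repair --- restricting Theorem~\ref{t:dom-hard} to instances with $\gamma(G)=\Omega(n)$ via padding with $\Theta(n)$ pendant $P_3$'s --- does not work, and in fact is backwards. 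If $\gamma(G')=\Omega(n)$ while $|V(G')|=O(n)$, then outputting all of $V(G')$ is a \emph{constant-factor} approximation, so \textsc{Minimum Domination} on that restricted class is in APX and certainly not $(1-\epsilon)\ln|V|$-hard. The set-cover--type hardness in Theorem~\ref{t:dom-hard} lives precisely on instances where the optimum is tiny compared to $n$; padding the optimum up to linear size destroys the hardness rather than preserving it. So the argument as proposed cannot be completed.

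The fix requires a different gadget, one with \emph{constant} rather than linear additive overhead. The paper attaches a pendant vertex $x_i$ to each $v_i$ and joins all the $x_i$ to a single shared vertex $y$ (with one further pendant $z$ at $y$); then every dominating set of $G$ extends to a semi-TD-set of $H$ by adding just the one vertex $y$, giving $\semiT(H)\le\gamma(G)+1$, while $|V_H|=2|V|+2$ keeps $\ln|V_H|$ within an additive constant of $\ln|V|$. The remaining additive $+1$ is then absorbed by the standard trick you did not need in your version: for a fixed integer $k$, first check by brute force (in polynomial time) whether $\gamma(G)\le k$; otherwise $\gamma(G)>k$ and $1+\gamma(G)<(1+\tfrac1k)\gamma(G)$, so the ratio degrades only by the factor $(1+\tfrac1k)$, which can be made small enough that $(1-\epsilon)(1+\tfrac1k)<1-\epsilon'$ for some $\epsilon'>0$. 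Your extraction step (replacing gadget vertices by the corresponding $v_i$) is the right idea and carries over to the correct construction, but the construction itself must be changed.
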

\begin{proof}
We first describe an approximation preserving reduction from the \textsc{Minimum Domination} problem to the \textsc{Minimum Semitotal Domination} problem. This together with the inapproximability bound of the \textsc{Minimum Domination} problem will provide the desired result. Let $G=(V,E)$, where $V=\{v_{1},v_{2},\ldots,v_{n}\}$ be an arbitrary instance of the \textsc{Minimum Domination} problem. Now, we construct another graph $H=(V_{H},E_{H})$, an instance of the \textsc{Minimum Semitotal Domination} problem in the following way: $V_{H}=V\cup \{x_{i} \mid i \in [n] \} \cup \{y,z\}$ and $E_{H}=E\cup \{v_{i}x_{i},x_{i}y\mid i \in [n]\} \cup \{yz\}$. Note that $|V_{H}|=2|V|+2$.
The graph $G=(V,E)$, where $V=\{v_{1},v_{2},v_{3},v_{4}\}$ and $E=\{v_{1}v_{2},v_{2}v_{3},v_{3}v_{4}\}$, and the associated graph $H$ are shown in Fig.~\ref{f:hard} to illustrate the above construction.

\begin{figure}
 \begin{center}
  \includegraphics[width=8cm, height=3.5cm]{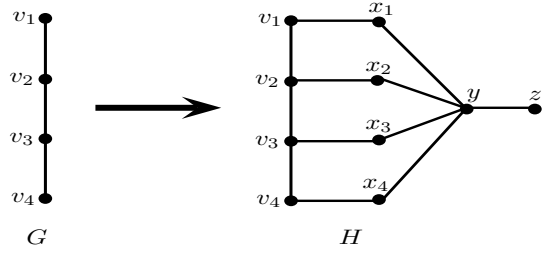}\\
 \caption{An illustration of the construction of $H$ from $G$  in the proof of Theorem~\ref{t:approx-hard}.}
\label{f:hard}
\end{center}
\end{figure}

Let $D^{*}$ denote a minimum dominating set of $G$ and $D_{st}^{*}$ a minimum semi-TD-set of $H$. Every dominating set of $G$ can be extended to a semi-TD-set of $H$ by adding to it the vertex $y$, implying that $|D_{st}^{*}|\le 1+|D^{*}|$.

Suppose that the \textsc{Minimum Semitotal Domination} problem can be approximated within a ratio of $\alpha$, where $\alpha=(1-\epsilon)\ln(|V_{H}|)$ for some fixed $\epsilon>0$, by using some algorithm, say Algorithm~A, that runs in polynomial time. Let $k$ be a fixed positive integer. Next, we propose an algorithm, say \textbf{ALGO-DOM-SET} to compute a dominating set of a given graph $G$ in polynomial time. Our algorithm is given in Algorithm~$2$.
\small
\begin{algorithm}[H]
\caption{\textbf{: ALGO-DOM-SET(G)}}
 \textbf{Input:} A graph $G=(V,E)$.\\
\textbf{Output:} A dominating set $D$ of $G$.\\
\Begin{
 \eIf {there exists a dominating set $S$ of cardinality at most $k$}{
 $D=S$;}
{
Construct the graph $H$;\\
Compute a semi-TD-set $D_{st}$ of $H$ using Algorithm~A;\\
$D=D_{st}\cap V$;\\
\For {i=1 to n } {
\If {$x_{i}\in D_{st}$} {
$D=D\cup \{v_{i}\}$;
}
}
}
return $D$;
}
\end{algorithm}

Note that  Algorithm~$2$ is a polynomial time algorithm as step $1$ runs in polynomial time and Algorithm~A is a polynomial time algorithm. If $|D^{*}|$ is at most $k$, then $D^{*}$ can be computed in polynomial time. Next we analyze the case when $|D^{*}|$ is greater than $k$.

If $D$ is a dominating set of $G$ produced by Algorithm~$2$, then $|D|\le |D_{st}|\le \alpha |D_{st}^{*}|\le \alpha (1+|D^{*}|)<\alpha (1+\frac{1}{k})|D^{*}|$. Therefore, Algorithm~$2$ approximates the \textsc{Minimum Domination} problem within ratio $\alpha (1+\frac{1}{k})$. Recall that $\alpha=(1-\epsilon)\ln(|V_{H}|)$ for some fixed $\epsilon>0$. Choosing the integer $k > 0$ large enough so that $1/k<\epsilon/2$, we note that
\[
\alpha \left(1+\frac{1}{k}\right) < \left(1-\epsilon\right) \left(1+ \frac{\epsilon}{2}\right) \ln(|V_{H}|) = (1-\epsilon')\ln(|V_{H}|) \approx (1-\epsilon')\ln(|V|)
\]
where $\epsilon' = \epsilon/2 + \epsilon^{2}/2$ and $|V_{H}|=2|V|+2$. This proves that the \textsc{Minimum Domination} problem can be approximated within ratio $(1-\epsilon')\ln(|V|)$ for some fixed $\epsilon'>0$. By Theorem \ref{t:dom-hard}, if the \textsc{Minimum Domination} problem can be approximated within  ratio $(1-\epsilon')\ln(|V|)$, then NP $\subseteq$  DTIME $(|V|^{O(\log  \log  |V|)})$. It follows that if the \textsc{Minimum Semitotal Domination} problem can be approximated within $(1-\epsilon)\ln(|V_{H}|)$ for any $\epsilon>0$, then  NP $\subseteq$  DTIME $(|V_{H}|^{O(\log  \log  |V_{H}|)})$. Hence, the \textsc{Minimum Semitotal Domination} problem for a graph $H=(V_{H},E_{H})$ cannot be approximated within $(1-\epsilon)\ln(|V_{H}|)$ for any $\epsilon>0$ unless  NP $\subseteq$  DTIME $(|V_{H}|^{O(\log  \log  |V_{H}|)})$.
\qed
\end{proof}

\subsection{APX-completeness}

By Theorem \ref{th:stdapprox},  the \textsc{Minimum Semitotal Domination} problem for bounded degree graphs can be approximated within a constant. Thus, the
\textsc{Minimum Semitotal Domination} problem for bounded degree graphs is in APX. In this subsection, we show that the \textsc{Minimum Semitotal Domination} problem is
APX-complete even for graphs with maximum degree~$4$. For this purpose, we recall the concept of L-reduction.

\begin{definition} Given two NP optimization problems $F$ and $G$ and a polynomial time transformation $f$ from instances of $F$ to instances of $G$, we say that $f$ is an L-reduction if there are positive constants $\alpha$ and $\beta$ such that for every instance $x$ of $F$ the following holds.
\begin{enumerate}
  \item $opt_{G}(f(x)) \le  \alpha \cdot opt_{F}(x)$.
  \item for every feasible solution $y$ of $f(x)$ with objective value $m_{G}(f(x),y)=c_{2}$
we can in polynomial time find a solution $y'$ of $x$ with
$m_{F}(x,y')=c_{1}$ such that $|opt_{F}(x)-c_{1}| \le \beta
|opt_{G}(f(x))-c_{2}|$.
\end{enumerate}

To show the APX-completeness of a problem $\Pi \in $APX, it suffices to show that there is an L-reduction from some APX-complete problem to $\Pi$.
\end{definition}

To show the APX-completeness of the \textsc{Minimum Semitotal Domination} problem, we give an L-reduction from the \textsc{Minimum Vertex Cover} problem.  A vertex \textsc{covers} an edge if it is incident with the edge. A \textsc{vertex cover} in a graph $G$ is a set of vertices that cover all the edges of $G$. A \textsc{minimum vertex cover} in $G$ is a vertex cover of $G$ of minimum cardinality. The \textsc{Minimum Vertex Cover} problem for a graph $G$ is to find a minimum vertex cover of~$G$.

\begin{theorem}{\rm (\cite{alimonti})}
\label{th:vc}
The \textsc{Minimum Vertex Cover} problem is APX-complete  for graphs with maximum degree~$3$.
\end{theorem}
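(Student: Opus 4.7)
The plan is to establish both membership in APX and APX-hardness for Minimum Vertex Cover on graphs of maximum degree~$3$. Membership is immediate: the standard maximal-matching heuristic produces a vertex cover whose size is at most $2 \cdot \mathrm{VC}^*(G)$, since any maximal matching $M$ has both endpoints forming a cover (else an uncovered edge could extend $M$), while every cover must contain at least one endpoint of each $M$-edge. This works uniformly for all graphs, so the maximum-degree-$3$ restriction sits comfortably inside APX.

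For APX-hardness I would give an L-reduction from the \textsc{Maximum Independent Set} problem on cubic graphs, which is itself APX-complete (this is the usual vehicle in Alimonti--Kann, obtained ultimately from MAX-3SAT-B via degree-reduction gadgets that replace each variable by a cycle of fresh literal copies). The reduction $f$ is the identity on instances: given a cubic graph $G=(V,E)$ as an input to Max IS, output the same graph $G$ as the input to Min VC. The well-known complementarity $S \text{ independent} \iff V\setminus S \text{ is a vertex cover}$ gives a trivial bijection between feasible solutions with $|S|+|V\setminus S|=|V|=n$, so in particular $\mathrm{VC}^*(G)=n-\mathrm{IS}^*(G)$.

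To verify the L-reduction, I need constants $\alpha,\beta>0$ such that $\mathrm{VC}^*(G)\le \alpha\cdot \mathrm{IS}^*(G)$ and such that any vertex cover $C$ with $|C|=c_2$ yields an independent set $S=V\setminus C$ with $|S|=c_1$ satisfying $|\mathrm{IS}^*(G)-c_1|\le \beta\,|\mathrm{VC}^*(G)-c_2|$. For the first inequality I would invoke Brooks' theorem (or a one-line greedy argument) to conclude that any cubic graph admits a proper $3$-coloring, hence $\mathrm{IS}^*(G)\ge n/3$ (in fact $\ge n/4$ already suffices), so
\[
\mathrm{VC}^*(G)=n-\mathrm{IS}^*(G)\le 2\,\mathrm{IS}^*(G),
\]
giving $\alpha=2$. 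For the second inequality, the identity
\[
\mathrm{IS}^*(G)-c_1=\bigl(n-\mathrm{VC}^*(G)\bigr)-\bigl(n-c_2\bigr)=c_2-\mathrm{VC}^*(G)
\]
yields $\beta=1$ on the nose. Combining these with the APX-completeness of Max IS on cubic graphs transfers APX-hardness to Min VC on graphs of maximum degree~$3$.

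The main obstacle is not the L-reduction itself, which is essentially bookkeeping once the linear bound $\mathrm{IS}^*(G)\ge n/3$ is in hand, but rather the input hypothesis that Max IS (equivalently Max Clique on the complement, or equivalently Min VC on the same graph in a setting where we cannot yet invoke the conclusion) is APX-hard on \emph{cubic} graphs. If one insists on a self-contained argument, the obstacle becomes building the degree-reduction gadgets that turn an APX-hard bounded-occurrence \textsc{Max-3SAT} instance into a cubic graph while preserving both a linear bound between the optima and an additive correspondence between the objective values; the classical solution uses a cycle-of-literals gadget for each variable (length equal to the number of occurrences) whose alternating independent set encodes a truth assignment, together with clause triangles connected appropriately. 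Once these gadgets are in place the two inequalities of the L-reduction reduce to counting edges and variables.
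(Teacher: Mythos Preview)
The paper does not prove this theorem; it is stated with a citation to Alimonti and Kann~\cite{alimonti} and used as a black box in the L-reduction that follows (Theorem~\ref{t:apxcomplete}). There is therefore no proof in the paper to compare your proposal against.

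Your sketch is a reasonable outline of the standard argument. Membership in APX via the maximal-matching $2$-approximation is correct. For hardness, your L-reduction from \textsc{Max Independent Set} on cubic graphs is formally valid (the complementarity gives $\beta=1$ exactly, and the linear lower bound on $\mathrm{IS}^*$ gives a constant $\alpha$), but as you yourself point out it is essentially tautological: \textsc{Max IS} and \textsc{Min VC} are complementary problems on the same instance, so asserting APX-hardness of one on cubic graphs is asserting it of the other. The real work---the degree-reduction gadgets from bounded-occurrence \textsc{Max-3SAT}---is only sketched. One small correction: the bound $\mathrm{IS}^*(G)\ge n/3$ via Brooks' theorem fails for $K_4$, so you should rely on the greedy bound $\mathrm{IS}^*(G)\ge n/(\Delta+1)=n/4$ (which you mention parenthetically) and take $\alpha=3$ rather than $\alpha=2$.
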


Now, we are ready to prove the following theorem.

\begin{theorem}\label{t:apxcomplete}
The \textsc{Minimum Semitotal Domination} problem is APX-complete for graphs with maximum degree~$4$.
\end{theorem}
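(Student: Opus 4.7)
The plan is to establish APX-completeness via an L-reduction from the Minimum Vertex Cover problem on graphs of maximum degree~$3$ (APX-complete by Theorem~\ref{th:vc}); APX membership is handed to us by Theorem~\ref{th:stdapprox}, so only APX-hardness requires work. Given an instance $G=(V,E)$ of MVC with $V=\{v_1,\ldots,v_n\}$ and $\Delta(G)\le 3$, I would construct the MSDP instance $H$ by attaching a small pendant gadget to each vertex of $G$. The natural first candidate is a pendant path $v_i-x_i-y_i$ of length two, so $V_H = V \cup \{x_i,y_i : i \in [n]\}$ and $E_H = E \cup \{v_ix_i, x_iy_i : i \in [n]\}$; this preserves bipartiteness and yields $\Delta(H)\le 4$, with $\deg_H(x_i)=2$ and $\deg_H(y_i)=1$. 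If this simple gadget turns out to be too weak to encode the vertex-cover condition (see below), I would enrich it, e.g.\ with an additional pendant edge $y_i-z_i$ or with an edge-indexed auxiliary vertex per edge of $G$.

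The technical heart of the proof is an identity of the form $\semiT(H) = n + \tau(G)$. For the upper bound, given a minimum vertex cover $C$ of $G$, I take $S := C \cup \{x_i : i \in [n]\}$ and verify $S$ is a semi-TD-set of $H$: each $y_i$ is dominated by $x_i\in S$; each $v_i$ is dominated by $x_i\in S$; each $x_i\in S$ is semi-dominated within distance two either by $v_i\in C$ directly, or, when $v_i\notin C$, by a $C$-neighbor of $v_i$ (such a neighbor exists because $C$ is a vertex cover and $G$ has no isolated vertices); and each $v\in C\cap S$ is semi-dominated by $x_v$ at distance one. For the matching lower bound, I would proceed by normalization swaps in the style of Lemmas~\ref{l:lemGP4} and~\ref{l:interval}: starting from any semi-TD-set $D$ of $H$, replace every $y_i\in D$ by $x_i$ and discard redundant pendant vertices to reach a semi-TD-set $D'$ with $|D'|\le|D|$ in which $x_i\in D'$ and $y_i\notin D'$ for every $i$. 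A per-edge case analysis on the semi-domination constraint at each $x_i$ then shows that $D'\cap V$ is a vertex cover of $G$, giving $|D|\ge|D'|\ge n + \tau(G)$.

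Given this identity, the L-reduction parameters fall out immediately. For condition~(i), the inequality $\tau(G)\ge n/(1+\Delta)\ge n/4$ valid for $\Delta(G)\le 3$ yields $\semiT(H) = n+\tau(G) \le 5\,\tau(G)$, so $\alpha=5$ suffices. For condition~(ii), the normalization swaps constitute a polynomial-time procedure that, applied to any semi-TD-set $D_H$ of $H$, returns the vertex cover $D_G := D'\cap V$ of $G$ with $|D_G|\le|D_H|-n$; hence $|\tau(G)-|D_G||\le|\semiT(H)-|D_H||$ and $\beta=1$ works. The principal obstacle I anticipate is the per-edge step in the lower bound — arguing that $D'\cap V$ is genuinely a vertex cover rather than merely a dominating set of $G$. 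This is precisely the point most sensitive to the choice of pendant gadget: the simple length-two pendant only guarantees domination of each $v_i$ via its $x_i$, and if a direct case analysis fails to upgrade this to edge-coverage, a richer edge-sensitive gadget (so that the distance-two witness required by the semi-domination of $x_i$ must certify coverage of every edge incident to $v_i$) must be substituted before finishing the argument.
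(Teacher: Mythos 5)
There is a genuine gap, and it sits exactly where you feared. With your primary gadget (the pendant path $v_i\hbox{-}x_i\hbox{-}y_i$), after normalizing so that $x_i\in D'$ and $y_i\notin D'$ for all $i$, the set $\{x_i : i\in[n]\}$ already dominates every vertex of $H$, and the only remaining constraint is the semi-domination of each $x_i$, which asks for a $D'$-vertex in $N_2[x_i]\setminus\{x_i\} = \{y_i, v_i\}\cup N_G(v_i)$, i.e.\ in $N_G[v_i]$. That forces $D'\cap V$ to be a \emph{dominating set} of $G$, not a vertex cover, so the claimed identity $\semiT(H)=n+\tau(G)$ is false: it should read $\semiT(H)=n+\gamma(G)$. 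A concrete counterexample is $G=C_6$ (maximum degree $2\le 3$), where $\tau(G)=3$ but $D=\{x_1,\ldots,x_6\}\cup\{v_1,v_4\}$ is a semi-TD-set of $H$ of size $8<n+\tau(G)=9$. Your proposed repairs are not equal: lengthening the pendant path still only produces distance-two witnesses inside $N_G[v_i]$ and hence still yields domination rather than edge coverage, whereas the ``edge-indexed auxiliary vertex per edge'' option — which you mention but do not develop — is the one that works, and it works for a different reason than the one you are relying on. The paper's construction adds, for each edge $e_k=v_rv_s$, a vertex $e^k$ adjacent to both $v_r$ and $v_s$; the vertex-cover property is then forced by the \emph{domination} of $e^k$ (which requires $e^k$, $v_r$, or $v_s$ in the set, and $e^k$ can be swapped for an endpoint), not by the semi-domination condition at the pendant gadgets. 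The paper also uses a heavier per-vertex gadget (a $4$-cycle $u_ix_iy_iz_iu_i$ attached at $v_i$ together with a pendant $w_i$ on $u_i$) costing exactly $2$ per vertex, giving $\semiT(H)=\tau(G)+2n$ and the L-reduction constants $\alpha=7$, $\beta=1$.

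Two smaller remarks. First, once the per-edge vertices are added you must recheck the degree bound: in the paper's $H$, $\deg_H(v_j)=\deg_G(v_j)+1\le 4$ and $\deg_H(u_j)=4$, so $\Delta(H)\le 4$ holds, but with your length-two pendant plus edge vertices you would get $\deg_H(v_j)=2\deg_G(v_j)+1$, which can be $7$; so the vertex gadget cannot be chosen independently of the edge gadget. Second, your reduction with the simple gadget is not worthless — it is essentially an L-reduction from \textsc{Minimum Domination} on degree-$3$ graphs, which is also APX-hard — but as written it does not prove what you set out to prove from Theorem~\ref{th:vc}, and the identity and the per-edge case analysis at the heart of your argument do not survive.
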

\begin{proof}
Since by Theorem~\ref{th:vc}, the \textsc{Minimum Vertex Cover} problem is APX-complete  for graphs with maximum degree~$3$, to complete the proof of the theorem, it is enough to establish an L-reduction $f$ from the instances of the \textsc{Minimum Vertex Cover} problem for graphs with maximum degree $3$ to the instances
of the \textsc{Minimum Semitotal Domination} problem for graphs with maximum degree $4$. Given a graph $G = (V, E)$, where $V = \{v_1, v_2, \ldots, v_n\}$ and $E = \{e_1, e_2,\ldots, e_m\}$, we construct
a graph $H = (V_H, E_H)$ in the following way.
\begin{enumerate}
\item For each vertex $v_{i}$, we add a cycle $C_{i} \colon u_{i}x_{i}y_{i}z_{i}u_{i}$ on four vertices to $H$, and we add the edge $u_{i}v_{i}$ joining the vertices $u_{i}$ and $v_{i}$ in $H$. Further, we add a new vertex $w_i$ and the pendant edge $u_{i}w_{i}$.
\item For each edge $e_{i}=(v_{j},v_{k})$ in $E$, introduce a vertex $e^{i}$ in $H$. Also add the edges $e^{i}v_{j}$ and  $e^{i}v_{k}$ in $H$.
\end{enumerate}

Formally, the vertex set $V_{H}=V\cup \{u_{i},x_{i},y_{i},z_{i},w_{i}\mid i \in [n] \} \cup \{e^{i}\mid e_{i}\in E\}$ and edge set $E_{H}=\{v_{i}u_{i},u_{i}w_{i},u_{i}x_{i},x_{i}y_{i},y_{i}z_{i},z_{i}u_{i} \mid i \in [n]\} \cup \{v_{j}e^{k} \mid  v_j$ is incident to $e_k$ in $G\}$. Note that the maximum degree of $H$ is~$4$. The construction of a graph $H$ from a graph $G$ is illustrated in Fig.~\ref{f:apx}. Now, we first prove the following claim.

\begin{figure}
 \begin{center}
  \includegraphics[width=11cm, height=5.5cm]{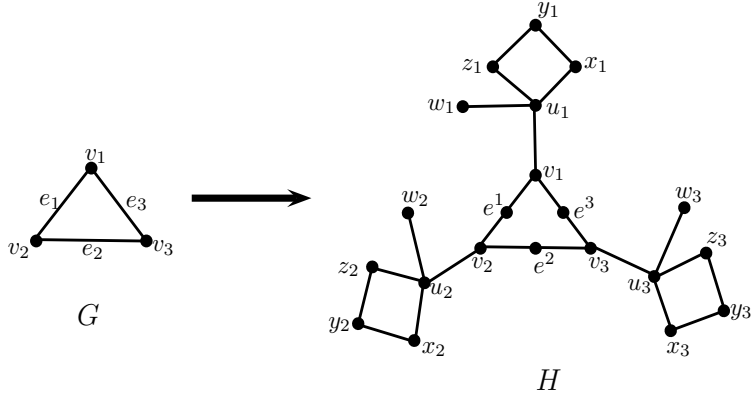}\\
 \caption{An illustration of the construction of $H$ from $G$  in the proof of Theorem~\ref{t:apxcomplete}.}
\label{f:apx}
\end{center}
\end{figure}

\begin{claim}
\label{c:claim_VC}
If $V_{c}^{*}$ is a minimum vertex cover of $G$, then $|V_{c}^{*}| = \semiT(H) - 2n$. %and $D_{st}^{*}$ is a minimum semi-TD-set of $H$, then $|D_{st}^{*}| = |V_{c}^{*}|+2n$.
\end{claim}
\begin{proof}
Let $V_{c}^{*}$ be a minimum vertex cover of $G$. The set $V_{c}^{*} \cup \{u_{i},y_{i}\mid i \in [n]\}$ is a semi-TD-set of $H$, and so $\semiT(H) \le |V_{c}^{*}| + 2n$.
Now, let $D_{st}^{*}$ be a minimum semi-TD-set of $H$, and so $\semiT(H) = |D_{st}^{*}|$. For each $i \in [n]$, in order to dominate the vertices $w_i$ and $y_i$, we note that $|D_{st}^{*}\cap \{u_{i},w_{i},x_{i},y_{i},z_{i}\}|\ge 2$. Also, for each $j \in [m]$, if $e_j = v_rv_s$ is an arbitrary edge in $G$, then in order to dominate the vertex $e^{j}$ in $H$ we note that $e^{j} \in D_{st}^{*}$ or $D_{st}^{*}$ contains at least one of $v_{r}$ or $v_{s}$. If $e^{j}$ belongs to $D_{st}^{*}$, then replace $e^{j}$ with either $v_{r}$ or $v_{s}$ in $D_{st}^{*}$. Now define $V_{c} = D_{st}^{*}\cap V$. The resulting set $V_{c}$ contains at least one end of the edge $e_{j}$ in $G$ for all $j \in [m]$, implying that $V_{c}$ is a vertex cover of $G$. Therefore, $|V_{c}^{*}| \le |V_{c}| \le |D_{st}^{*}|-2n = \semiT(H) - 2n$. As observed earlier, $|V_{c}^{*}| \ge \semiT(H) - 2n$. Consequently, $|V_{c}^{*}| = \semiT(H) - 2n$. This completes the proof of the claim.
\qed
\end{proof}

We now return to the proof of Theorem~\ref{t:apxcomplete}.  Let $V_{c}^{*}$ is a minimum vertex cover of $G$, and let $D_{st}^{*}$ be a minimum semi-TD-set of $H$. By Claim~\ref{c:claim_VC}, $\semiT(H) = |D_{st}^{*}| = |V_{c}^{*}| + 2n$. Since the maximum degree of $G$ is~$3$, we note that $n \le 3|V_{c}^{*}|$ and therefore $|D_{st}^{*}| \le 7|V_{c}^{*}|$.

Now, consider a semi-TD-set, say $D_{st}$, of $H$. For each $j \in [m]$, if $e_j = v_rv_s$ is an edge in $G$ and $e^{j}$ belongs to $D_{st}$, then we replace $e^{j}$ in $D_{st}$ with either $v_{r}$ or $v_{s}$. Let $V_{c} = D_{st} \cap V$. Analogously as in the proof of Claim~\ref{c:claim_VC}, the set $V_{c}$ is a vertex cover of $G$ and $|V_{c}| \le |D_{st}| - 2n$. Hence, $|D_{st}| - |D_{st}^{*}| \ge |V_{c}| + 2n - |D_{st}^{*}| = |V_{c}|-|V_{c}^{*}|$.  This proves that $f$ is an L-reduction with $\alpha=7$ and $\beta=1$, and completes the proof of Theorem~\ref{t:apxcomplete}.
\qed
\end{proof}

It can be observed that the graph $H$ constructed in Theorem~\ref{t:apxcomplete} is also a bipartite graph. Hence as an immediate corollary of Theorem~\ref{t:apxcomplete}, we have the following result.

\begin{cor}
The \textsc{Minimum Semitotal Domination} problem is APX-complete for bipartite graphs with maximum degree~$4$.
\end{cor}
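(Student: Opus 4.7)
The only thing left to verify is that the graph $H$ constructed in the proof of Theorem~\ref{t:apxcomplete} is bipartite; once that is established, the L-reduction built there from \textsc{Minimum Vertex Cover} on subcubic graphs (which is APX-complete by Theorem~\ref{th:vc}), combined with APX membership from Theorem~\ref{th:stdapprox}, immediately yields the stated corollary. So the proposal is just to exhibit an explicit bipartition of $V_H$.

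My plan is to define
\[
X = \{\, v_i, w_i, x_i, z_i \mid i \in [n] \,\}, \qquad Y = \{\, u_i, y_i \mid i \in [n] \,\} \cup \{\, e^k \mid e_k \in E \,\},
\]
and then walk through each edge type in $E_H$, checking that it goes between $X$ and $Y$. Specifically, the four edges of the attached cycle $C_i\colon u_i x_i y_i z_i u_i$ alternate between $Y$ and $X$ (a $4$-cycle is bipartite); the pendant edges $u_i w_i$ and connector edges $u_i v_i$ go from $Y$ to $X$; and each subdivision edge $v_j e^k$ goes from $X$ to $Y$. The conceptual reason is that the construction subdivides every edge of $G$ by a fresh vertex $e^k$, so every cycle of $G$ (odd or even) becomes a cycle of twice the length in $H$, while the only other cycles introduced are the $4$-cycles $C_i$; thus $H$ has no odd cycle.

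Finally, I would recall from the proof of Theorem~\ref{t:apxcomplete} that the maximum degree of $H$ is~$4$ (attained at $u_i$, with neighbors $v_i, w_i, x_i, z_i$, and possibly at $v_i$, whose degree is $1+\deg_G(v_i)\le 4$), so the L-reduction actually lands in the class of bipartite graphs of maximum degree~$4$. Since the approximation algorithm of Theorem~\ref{th:stdapprox} gives a constant-ratio approximation when $\Delta=4$, the problem lies in APX on this class, and combined with the L-reduction from an APX-complete problem, APX-completeness follows.

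There is no real obstacle here: the only work is the edge-by-edge check of the bipartition, which is purely routine given the explicit construction of $H$ in Theorem~\ref{t:apxcomplete}.
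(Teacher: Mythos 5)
Your proposal is correct and follows exactly the paper's route: the paper simply observes that the graph $H$ from Theorem~\ref{t:apxcomplete} is bipartite, and your explicit bipartition $X=\{v_i,w_i,x_i,z_i\}$, $Y=\{u_i,y_i\}\cup\{e^k\}$ is a valid verification of that observation (every edge type indeed crosses between the two sides). The combination with the maximum-degree-$4$ bound and APX membership via Theorem~\ref{th:stdapprox} is precisely what the paper intends.
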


\section{Conclusion}
\label{S:conclude}

In this paper, we have resolved the complexity status of the problem for chordal bipartite graphs, chordal graphs, planar graphs and interval graphs. We have proved that the \textsc{Semitotal Domination Decision} problem remains NP-complete for chordal graphs. We also proved that the \textsc{Minimum Semitotal Domination} problem is polynomial time solvable for interval graphs. It will be interesting to look for the complexity status of the problem for strongly chordal graphs, which is a subclass of chordal graphs, but superclass of interval graphs. One may also look for the complexity status of the problem for subclasses of chordal bipartite graphs, such as bipartite permutation graphs and convex bipartite graphs. We have also studied the approximation aspects of the problem. We have proved that the \textsc{Minimum Semitotal Domination} problem remains APX-complete even for bipartite graphs with maximum degree $4$. As the \textsc{Minimum Semitotal Domination} problem is easily solvable for graphs with maximum degree $2$, it will also be interesting to look for the complexity status of the problem for graphs with maximum degree $3$.


\begin{thebibliography}{99}

\bibitem{alimonti}
P. Alimonti and V. Kann. Hardness of approximating problems on cubic
graphs. Theor. Comput. Sci., 237 (2000) 123--134.

\bibitem{ausiello}
G. Ausiello, P. Crescenzi, G. Gambosi, V. Kann, A. Marchetti-Spaccamela and M. Protasi. Complexity and Approximation. Springer, 1999.

\bibitem{booth}
K.S. Booth and G.S. Leuker. Testing for consecutive ones
property, interval graphs, and graph planarity using PQ-
tree algorithms. J. Comput. System Sci., 13 (1976) 335-379.

\bibitem{chlebik}
M. Chleb\'{\i}k and J. Chleb\'{\i}kov\'{a}. Approximation hardness
of dominating set problems in bounded degree graphs. Inform. and Comput., 206 (2008) 1264--1275.

\bibitem{cormen}
T. H. Cormen, C. E. Leiserson, R. L. Rivest, and C. Stein. Introduction to Algorithms. Prentice Hall, India, 2001.

\bibitem{total-cbg}
P. Damaschke and H. M$\ddot{u}$ller and D. Kratsch.
Domination in convex and chordal bipartite graphs.
Inform. Process. Lett., 36 (1990) 231--236.

\bibitem{np}
M. R. Garey and D. S. Johnson. Computers and Interactability: a guide to the theory of NP-completeness. Freeman, New York, 1979.

\bibitem{semi-total2}
W. Goddard and M. A. Henning and C. A. McPillan. Semitotal domination in graphs. Util. Math., 94 (2014) 67--81.

\bibitem{golumbic}
M.C. Golumbic. Algorithmic Graph Theory and Perfect
Graphs. Academic Press, New York, 1980.

\bibitem{hhs1}
T. W. Haynes, S. T. Hedetniemi, and P. J. Slater. Fundamentals of Domination in
Graphs, volume 208. Marcel Dekker Inc., New York, 1998.

\bibitem{hhs2}
T. W. Haynes, S. T. Hedetniemi, and P. J. Slater. Domination in graphs: Advanced
topics, volume 209. Marcel Dekker Inc., New York, 1998.

\bibitem{total1}
M. A. Henning. A survey of selected recent results on total domination in graphs. Discrete Math., 309 (2009) 32--63.

\bibitem{semi-total3}
M. A. Henning, A. J. Marcon. On matching and semitotal domination in graphs. Discrete Math., 324 (2014) 13--18.

\bibitem{semi-total1}
M. A. Henning, A. J. Marcon. Vertices contained in all or in no minimum semitotal dominating
set of a tree. Discuss. Math. Graph Theory, 36 (2016) 71--93.

\bibitem{semi-total4}
M. A. Henning, A. J. Marcon.
Semitotal Domination in Claw-Free Cubic Graphs.
Ann. Combin., 20 (2016) 799--813.

\bibitem{semi-total5}
M. A. Henning.
Edge Weighting Functions on Semitotal Dominating Sets.
Graphs Combin., 33 (2017) 403--417.

\bibitem{total2}
M. A. Henning, A. Yeo. Total Domination in Graphs, Springer, New York, 2013.

\bibitem{muller}
H. M$\ddot{u}$ller and A. Brandst$\ddot{a}$dt,
The \textsc{ NP}-completeness of steeiner tree and dominating set for chordal bipartite graphs.
Theoret. Comput. Sci., 53 (1987) 257--265.

\bibitem{total-npc}
J. Pfaff, R.C. Laskar, S.T. Hedetniemi.
\textsc{NP}-completeness of total and connected domination and irredundance for bipartite graphs.
Technical Report 428, Clemson University, Dept. Math. Science, 1983.

\end{thebibliography}
\end{document}